%
%
%
%
%
%
%
\documentclass[%
preprint,
 amsmath,amssymb,
]{revtex4-1}

\usepackage{graphicx}
\usepackage{dcolumn}
\usepackage{bm}
\usepackage{braket}
\usepackage{graphicx}
\usepackage{hyperref}
\usepackage{amsthm}
\usepackage{bbold}
\usepackage{tikz}
\usepackage{amsmath}
\usepackage[shortlabels]{enumitem}
\usepackage{caption}
\usepackage{subcaption}
\usepackage{tkz-euclide}
\usepackage{adjustbox}
\usepackage{tikz}
\usepackage{pst-solides3d}
\usetikzlibrary{matrix}
\usepackage{tikz-3dplot}
\usepackage[colorinlistoftodos,prependcaption,textsize=small]{todonotes}


\theoremstyle{plain}
\newtheorem{thm}{Theorem}[section]
\newtheorem{lem}[thm]{Lemma}
\newtheorem{prop}[thm]{Proposition}
\theoremstyle{definition}
\newtheorem{defn}[thm]{Definition}
\newtheorem{exmp}[thm]{Example}
\theoremstyle{remark}
\newtheorem{rem}[thm]{Remark}

\begin{document}

\preprint{APS/123-QED}

\title{Topological Order from a Cohomological and Higher Gauge Theory perspective}
\thanks{A footnote to the article}

\author{R. Costa de Almeida}
\email{ricardo.costa.almeida@usp.br}
\author{J. P. Ibieta-Jimenez}%
 \email{pibieta@if.usp.br}
 \author{J. Lorca Espiro} 
 \email{j.lorca.espiro@usp.br}
 \altaffiliation[Also at ]{Departamento de Ciencias F\'{i}sicas, Universidad de la Frontera, Avda. Francisco Salazar 01145,
Casilla 54-D Temuco, Chile.}
 \author{P. Teotonio-Sobrinho} 
 \email{teotonio@if.usp.br}
\affiliation{%
 Departamento de F\'{i}sica Matem\'{a}tica, Universidade de S\~{a}o Paulo\\ Rua do Mat\~{a}o Travessa R 187, S\~{a}o Paulo, CEP 05508-090.
}%

\date{\today}

\begin{abstract}
In recent years, attempts to generalize lattice gauge theories to model topological order have been carried out  through the so called $2$-gauge theories. These have opened the door to interesting new models and new topological phases which are not described by previous schemes of classification. In this paper we show that we can go beyond the $2$-gauge construction when considering chain complexes of abelian groups. Based on elements of homological algebra we are able to greatly simplify already known constructions for abelian theories under a single all encompassing framework. Furthermore, this formalism allows us to systematize the computation of the corresponding topological degeneracies of the ground states and establishes a connection between them and a known cohomology, which conveniently characterizes them with a suitable set of quantum numbers.
\end{abstract}

\pacs{Valid PACS appear here}
\keywords{higher gauge theories; topological order;  }
\maketitle

\tableofcontents

\section{\label{sec:intro}Introduction}

General frameworks for the classification of topological phases of matter are of uttermost importance for both the condensed matter community and the quantum information community. The standard approach has been to study topological phases through Topological Quantum Field Theories (TQFTs) \cite{Witten,Atiyah88} and the classification of its topological phases. For instance, it has been argued that in the absence of any further global symmetry and for the case of gapped phases of finite gauge theories, the Dijkgraaf and Witten classification is appropriate \cite{Witten,Dijkgraaf90}. Being precise, Dijkgraaf-Witten proposed topological actions with gauge group $G$ that have been used to classify gapped phases of matter having $G$ as a global symmetry. This scheme is known as symmetry protected topological (SPT) phases \cite{Wen-Gu09,Chen12,Oshikawa10,Oshikawa12,Fidkowsky11,LevinGu} and is described by the group cohomology classification of SPT phases \cite{Chen-Wen13}. In the same vein, later on a more general classification of SPT phases was given in \cite{Kapustin14}. 

However, other techniques allow to go beyond the Dijkgraaf-Witten approach such as those based on higher gauge fields and higher gauge symmetries. Recently, higher dimensional TQFTs have attracted interest as means to further classify topological phases of matter in $3$D systems and as possible source of quantum error correction codes. Notably, in \cite{Kapustin13}, a class of TQFTs involving $1$-gauge and $2$-gauge fields has been studied, by using $2$-groups instead of the usual notion of group. There, the existence of gapped phases of matter protected by a $2$-group instead of a $1$-group symmetry was suggested. Moreover, along these lines, in \cite{Bullivant16}, a Hamiltonian formulation of the Yetter's homotopy $2$-type TQFT \cite{Yetter} was constructed with the aim of understanding $(3+1)$ topological phases of matter. Related works worth mentioning are those of \cite{Yoshida16}, where bosonic lattice realizations of SPT phases with higher form symmetry are presented, and \cite{Yoshida15,Yoshida17}, where is discussed their connection to fault-tolerant logical gates in topological quantum codes.

In this paper we present a formalism that builds upon the second approach and appears to be a suitable general framework for the study of a large class of models involving higher gauge fields and symmetries in all possible dimensions. We focus on the abelian case and replace the ordinary (abelian) gauge group by a more general mathematical object, namely, a chain complex of abelian groups. As a consequence, the notion of \textit{configuration} is replaced by \textit{maps between two chain complexes}. Moreover, in this framework, a Hamiltonian formulation of higher lattice gauge theories have been defined in arbitrary dimensions, all of whom have a degenerate ground state subspace whose basis elements are in one-to-one correspondence with the cohomology classes having coefficients in the chain complex of abelian groups \cite{Brown}. This formalism also allows to explicitly show that the ground state degeneracy (GSD) is a topological invariant, for which we give a closed formula. Moreover, this setting naturally understands the GSD as exhibiting contributions from each dimension, therefore exhibiting the different intrinsic \textit{topological orders}.  

When taking the first complex related to the geometrical content to be a discretization of a compact manifold, the models describe gapped topological phases of matter \cite{Wen04,Sarma08,Anderson87}. This systematize in a unique framework a large class of intrinsic topological order models \cite{Wen89,Wen90,LevinWen} and profiles this formulation as a prototype for the study of topological phases of matter in higher dimensions, a topic that has been of great interest due to its possible applications in fault tolerant quantum computation settings \cite{Kitaev2,Nayak08,Freedman03}. Concretely, this approach provides possible candidates for quantum memories as quantum error correction codes, as presented in \cite{Hastings} which now becomes a special case of our formulation. The latter can also be understood as a family of quantum CSS stabilizer codes \cite{CalderbankCSS}, i.e. codes that use the ground state subspace to encode quantum information (thus, the number of logical qubits is related to the GSD of the particular model; we refer to \cite{Devitt} for a brief review on quantum codes and related topics). As it is the case with such codes  \cite{Bombin07,Bombin13,Bravyi14}, our class of models can also be studied in terms of Homology, allowing to consider the present as higher dimensional versions of the \textit{homological quantum error correction codes} \cite{Bravyi98,Vrana15,Anderson13hom}. Even further, the main result of this paper allows to characterize the coding space as well as the labels of the logical operators.

The contents of this paper are organized as follows: In Section \ref{sec:math} we establish the mathematical apparatus that will be used to construct the models. We are mostly interested in defining the cochain complex $\text{hom}(C,G)^{\bullet}$ and its dual, where $C$ and $G$ are two chain complexes of abelian groups. In Section \ref{sec:model}, we define the actual models: Hilbert space $\mathcal{H}$ and its states, the operators in question and all their important algebraic relations, and finally the Hamiltonian operator. In Section \ref{sec:gsd} we study the ground state subspace and we state our main result: there is a one-to-one correspondence between the basis elements of the ground state subspace and the elements of the $0$-th cohomology group of the cochain complex $hom(C,G)^{\bullet}$. We then characterize the ground state subspace basis in a way that is suitable to extract appropriate quantum numbers that will represent them. Section \ref{sec:Examples} is comprised of several concrete examples showing how to obtain already available models from the formalism presented here. A second set of examples is devoted to illustrate the calculations of the ground state degeneracy for different topological situations using our main result (Theorem \ref{thm:main}). The final Section \ref{sec:discu} is a wrap-up discussion of this paper in which we stress the main results as well as briefly mention further research topics along the lines presented here. We also connect these results with other related fields of research. At the end of the work, there are four appendices intended to supply additional information to make the present work as self contained as possible.

\section{\label{sec:math}Mathematical Background}

The main goal of this section is to quickly review some notions of homological algebra which will be used in the development of the subsequent sections. For details regarding the subject, we refer the reader to \cite{Weibel}. Additionally, a brief discussion of representations of abelian groups and the corresponding dual groups is presented as these will also play an important role.

\subsection{Review of Homological Algebra}

\begin{defn}
\label{def:ChainCochainComplex}
A chain complex $(C_\bullet, \partial_\bullet)$ is a sequence of abelian groups $\{C_n\}_{n\in\mathbb{Z}}$ and group morphisms $\partial_n:C_n\rightarrow C_{n-1}$ such that the composition of two such morphism is trivial i.e. $\partial_{n-1}\partial_n=0$.

Similarly, a cochain complex $(C^\bullet, d^\bullet)$ is a sequence of abelian groups $\{C^n\}_{n\in\mathbb{Z}}$ and group morphisms $d^n:C^n\rightarrow C^{n+1}$ such that the composition of two such morphism is trivial i.e. $d_{n+1} d_n=0$.
\end{defn}

The development of homological algebra was largely motivated by its applications to algebraic topology. Of particular importance are the homology and cohomology groups defined below.

\begin{defn}
\label{def:HomologyCohomologyGroups}
Given a chain complex $(C_\bullet, \partial_\bullet)$, the homology groups $H_n(C)$ associated to it are defined by
\begin{align*}
H_n(C)=\text{ker}(\partial_n)/\text{im}(\partial_{n+1})
\end{align*}
In a similar fashion, if $(C^\bullet, d^\bullet)$ is a cochain complex then the cohomology groups $H^n(C)$ assigned to it are defined by
\begin{align*}
H^n(C)=\text{ker}(d^n)/\text{im}(d^{n-1})
\end{align*}
\end{defn}

These groups can be used as sources to construct topological invariants. Roughly speaking, given a topological space $X$ with suitable properties, it is possible to assign a chain complex $C(X)$ to it by some discretization procedure and define the homology $H_n(X)$ as $H_n(C(X))$. For instance, any manifold $M$ can be realized as a simplicial complex and there is a standard procedure for building a chain complex using this. Even though there are different simplicial complexes that correspond to the same manifold, all of them have the same homology groups such that $H_n(M)$ is a well-defined topological invariant. We provide an introduction to simplicial complexes and their geometrical properties in Appendix \ref{ap:algtop} so the reader can see a concrete description of this procedure.

There are also ways to assign a cochain complex to a manifold in order to obtain a topological invariant from the cohomology groups. One example of a topological invariant obtained this way is the De Rham Cohomology. Another procedure relies on the fact that, given a chain complex $(C_\bullet, \partial_\bullet)$ and an abelian group $S$, there is a corresponding cochain complex $(C^\bullet=\text{Hom}(C_n,S), d^\bullet)$ with $d^n(f)=f\partial_{n+1}$. The cohomology groups obtained this way, denoted $H^n(C,S)$, are called cohomology groups with coefficients in $S$.

As usual, when introducing an algebraic structure, it is important to define what the correct notion of a morphism is. The usual definition for chain complexes is by means of the chain maps but we choose to start with a more flexible definition and later on define chain maps as a special case.

\begin{defn}
\label{def:MapDegreeP}
Given two chain complexes $(C_\bullet,\partial_\bullet)\,,\,(C'_\bullet,\partial'_\bullet)$ a $p$-map $f:(C_\bullet,\partial_\bullet)\rightarrow(C'_\bullet,\partial'_\bullet)$ is a sequence of morphisms $f_n:C_n\rightarrow C'_{n-p}$.  
\end{defn}

The set of all $p$-maps, denoted $\text{hom}(C,C')^p$, is actually an abelian group under the binary operation defined by $(f+g)_n=f_n+g_n$. The unit of the group is the trivial $p$-map, denoted $0$, defined by the trivial morphisms $0_n: C_n \rightarrow C'_{n-p}$. It is then straightforward to verify that
\begin{align}\label{def:hom}
\text{hom}(C,C')^p = \prod_n \text{Hom}(C_n,C'_{n-p}) \: ,
\end{align}
which diagrammatically can be represented as in Figure \ref{fig:ChainMaps}.

\begin{figure}[h]
\begin{center}
\begin{adjustbox}{max size={.5\textwidth}{.5\textheight}}
\begin{tikzpicture}
  \matrix (m) [matrix of math nodes,row sep=2em,column sep=3em,minimum width=2em]
  {   \cdots & C_{n+1} & C_{n} & C_{n-1} & \cdots  \\
      \cdots  & C'_{n+1} & C'_{n} & C'_{n-1} & \cdots \\};
  \path[-stealth]
    (m-1-1) edge node [above] {$\partial_{n+2} $} (m-1-2)
    		edge node [below] {$ g_{n+2} $} (m-2-2)
    (m-1-2) edge node [above] {$\partial_{n+1} $} (m-1-3)
            edge node [left] {$f_{n+1}$} (m-2-2)
            edge node [below] {$g_{n-1}$} (m-2-3)
    (m-1-3) edge node [above] {$\partial_{n}$} (m-1-4)
    		edge node [left] {$f_{n}$} (m-2-3)
            edge node [below] {$ g_n $} (m-2-4)
    (m-1-4) edge node [above] {$\partial_{n-1}$} (m-1-5)       
    		edge node [left] {$f_{n-1}$} (m-2-4)
            edge node [below] {$g_{n-1}$} (m-2-5)
    (m-2-1) edge node [below] {$\partial'_{n+2} $} (m-2-2)
    (m-2-2) edge node [below] {$\partial'_{n+1} $} (m-2-3)
    (m-2-3) edge node [below] {$\partial'_{n} $} (m-2-4)
    (m-2-4) edge node [below] {$\partial'_{n-1} $} (m-2-5);
\end{tikzpicture}
\end{adjustbox}
\end{center}
\caption{\label{fig:ChainMaps} Two elements $f \in \text{hom} (C,G)^0$ and $g \in \text{hom} (C,G)^1$}
\end{figure}

The abelian groups $\text{hom}(C,C')^p$ give rise to a cochain complex $(\text{hom}(C,C')^\bullet, \delta^\bullet)$ which we define below:

\begin{defn}
\label{def:InternalHom}
Let $\delta^p:\text{hom}(C,C')^p\rightarrow \text{hom}(C,C')^{p+1}$ be the group morphism defined by
\begin{align*}
(\delta^p f)_n=f_{n-1} \partial_n - (-1)^p \partial'_{n-p} f_n 
\end{align*}
where $f\in\text{hom}(C,C')^p$. It is straightforward to check that $\delta^{p+1} \delta^p=0$ so $(\text{hom}(C,C')^\bullet, \delta^\bullet)$ is a cochain complex. The situation is shown in the following diagram:
\begin{align*}
\cdots\xrightarrow{} \text{hom}(C,C')^p \xrightarrow{\delta^p} \text{hom}(C,C')^{p+1} \xrightarrow{} \cdots \quad .
\end{align*}
\end{defn}

The cohomology groups obtained from $(\text{hom}(C,C')^\bullet, \delta^\bullet)$ are denoted $H^p(C,C')$ and referred to as the cohomology groups of $C$ with coefficients in $C'$. These groups are related to the usual cohomology groups of $C$ with coefficients in $H_{n-p}(C')$ by the following theorem due to Ronald Brown \cite{Brown}: 
\begin{thm}
\label{thm:Brown}Given chain complexes $C$, $C'$ there is an isomorphism
\begin{align*}
\prod_n H^n (C , H_{n-p}(C'))\rightarrow H^p(C , C')
\end{align*}
\end{thm}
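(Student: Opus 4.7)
The overall strategy is to reduce the computation of $H^p(\text{hom}(C, C')^\bullet)$ to that of a complex with zero internal differential on the coefficient side, whereupon it decomposes into a product of ordinary cohomology groups. Two facts make this work: the chain complex $C$ arising from a simplicial discretization has each $C_n$ free abelian, so $\text{hom}(C, -)$ is well-behaved with respect to quasi-isomorphisms; and $\mathbb{Z}$ is a principal ideal domain of global dimension one, so every chain complex of abelian groups is quasi-isomorphic to its homology equipped with zero differential.

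The first step is to construct a quasi-isomorphism $\varphi: C' \xrightarrow{\sim} \bigoplus_m H_m(C')[m]$, where $H_m(C')[m]$ denotes the chain complex with $H_m(C')$ in degree $m$ and zero differential elsewhere. Concretely, one splits the short exact sequences $0 \to B_m(C') \to Z_m(C') \to H_m(C') \to 0$ and $0 \to Z_m(C') \to C'_m \to B_{m-1}(C') \to 0$, absorbing the $\text{Ext}^1_{\mathbb{Z}}$ obstructions into contractible direct summands. This is standard but technically the most delicate part of the argument.

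Next, apply $\text{hom}(C, -)$ to $\varphi$. Because each $C_n$ is free, the functor $\text{Hom}(C_n, -)$ is exact, so $\text{hom}(C, -)$ sends chain quasi-isomorphisms to cochain quasi-isomorphisms. Combined with the compatibility of $\text{hom}$ with direct sums in its second variable (in the product sense of \eqref{def:hom}), this gives
\begin{align*}
H^p(\text{hom}(C, C')^\bullet) \cong \prod_m H^p(\text{hom}(C, H_m(C')[m])^\bullet).
\end{align*}
Then unwind Definition \ref{def:InternalHom} for a complex $A[m]$ concentrated in degree $m$: only the component $f_{p+m}: C_{p+m} \to A$ can be nontrivial, and $\delta^p$ reduces to $f \mapsto f \circ \partial_{p+m+1}$, which is exactly the standard coboundary for the cochain complex computing $H^*(C, A)$ up to a degree shift. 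Hence $H^p(\text{hom}(C, A[m])^\bullet) \cong H^{p+m}(C, A)$, and re-indexing $n = p + m$ with $A = H_m(C')$ yields $\prod_n H^n(C, H_{n-p}(C'))$, producing the desired isomorphism.

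The main obstacle is the first step: although formality over a PID is standard, writing down the quasi-isomorphism $\varphi$ in a way that makes the resulting isomorphism natural (in particular, independent of the chosen splittings and contractible complements) requires careful bookkeeping. A direct attempt that tries to lift cocycle representatives $g_n \in \text{Hom}(C_n, H_{n-p}(C'))$ to $\tilde g_n: C_n \to Z_{n-p}(C')$ fails to yield a cocycle in $\text{hom}(C,C')^p$ on the nose, because the cocycle obstructions $\tilde g_{n-1}\partial_n$ land in $B_{n-1-p}(C')$ rather than being zero; the formality argument is the clean way to bypass these higher obstructions uniformly.
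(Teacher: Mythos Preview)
Your strategy is sound and in fact more self-contained than what the paper offers. The paper does not prove this theorem from scratch: it cites Brown's original article for the existence of the isomorphism, and Appendix~\ref{ap:Brown} merely constructs an explicit injective morphism $\alpha^p_\phi:\prod_n H^n(C,H_{n-p}(G))\to H^p(C,G)$ by choosing sections $\phi_n:H_n(G)\to\ker(\partial^G_n)$ of the projection $\pi_n$, verifying that the induced map on cohomology is injective and independent of $\phi$, and then concluding bijectivity from the cited isomorphism (an injective morphism between finite isomorphic groups is bijective). Your formality argument is a genuinely different route and would yield a complete proof on its own: once $C'$ is replaced by $\bigoplus_m H_m(C')[m]$ and $\text{hom}(C,-)$ is shown to preserve quasi-isomorphisms (via the mapping-cone argument, using that each $C_n$ is free and $C$ is bounded), the identification with $\prod_n H^n(C,H_{n-p}(C'))$ drops out from your degree-shift computation. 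The paper's construction buys an explicit, concretely computable map suited to labelling ground states; your argument buys an honest proof of surjectivity without appeal to an external reference.

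One caution on your first step. You produce a single quasi-isomorphism $\varphi:C'\to\bigoplus_m H_m(C')[m]$ by splitting both $0\to B_m\to Z_m\to H_m\to 0$ and $0\to Z_m\to C'_m\to B_{m-1}\to 0$. The second splits whenever $B_{m-1}$ is projective (automatic if the $C'_m$ are free over a PID), but the first need not split for arbitrary abelian $C'_m$; with $C'=(\mathbb{Z}/4\xrightarrow{\,\cdot 2\,}\mathbb{Z}/4)$ the sequence $0\to\{0,2\}\to\mathbb{Z}/4\to\mathbb{Z}/2\to 0$ is nonsplit. This is precisely the same obstruction the paper's Appendix~\ref{ap:Brown} glosses over when it asserts the sections $\phi_n$ exist. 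The clean fix, which your phrase ``absorbing the $\text{Ext}^1_{\mathbb{Z}}$ obstructions into contractible direct summands'' hints at but does not spell out, is to pass through a bounded free resolution $\tilde C'\to C'$ first: submodules of each free $\tilde C'_m$ are free over the PID, both sequences split for $\tilde C'$, you obtain $\tilde C'\to H_*(\tilde C')=H_*(C')$, and the zigzag $C'\leftarrow\tilde C'\to H_*(C')$ can then be pushed through $\text{hom}(C,-)$ leg by leg. With that adjustment your argument is complete.
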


Note that an element of $\text{ker}(\delta^0)$ corresponds to a sequence of group morphisms $f_n:C_n\rightarrow C'_n$ such that $(\delta^0 f)_n=f_{n-1} \partial_n - \partial'_{n-p} f_n=0$. This condition is precisely what defines chain maps:

\begin{defn}
\label{def:ChainMap}
Given two chain complexes $(C_\bullet,\partial_\bullet)\,,\,(C'_\bullet,\partial'_\bullet)$ a chain map $f:(C_\bullet,\partial_\bullet)\rightarrow(C'_\bullet,\partial'_\bullet)$ is a sequence of morphisms $f_n:C_n\rightarrow C'_n$ such that $f_n \partial'_{n-1}=\partial_n\ f_{n+1}$.
\end{defn}

Chain maps have the important property that they induce group morphism on the corresponding homology groups. It might happen that different chain maps induce the same morphism on homology groups. When that is the case, such maps are called chain homotopic. More precisely, two chain maps $f,f'$ are homotopic whenever there is some $t\in\text{hom}(C,C')^{-1} $ such that $f'=f+\delta^{-1}$, when that is the case $t$ is called a chain homotopy between $f$ and $f'$. Since $H^0(C,C')=\text{ker}(\delta^0)/\text{im}(\delta^{-1})$ by definition, we see that $H^0(C,C')$ is simply the group formed by homotopy classes of chain maps.

\subsection{Dual Groups and Representations}

All irreducible representations of an abelian group $S$ are one-dimensional and form an abelian group $\hat{S}$ when $S$ is finite. It is easier to understand $\hat{S}$ by observing that any irreducible representation $r$ of $S$ is completely specified by a group morphism $\chi_r : S \rightarrow U(1)$ defined by $\chi_r(g)=\text{Tr}(r(g))$, the character of $r$. The binary operation of $\hat{S}$ corresponds to the point wise multiplication of the corresponding characters, i.e.
\begin{align*}
\chi_{r+r'}(g)=\chi_r(g)\chi_{r'}(g)
\end{align*}

Therefore, one can think of $\hat{S}$ as being the group $\hat{S}=\text{Hom}(S,U(1))$ due to the correspondence $r\leftrightarrow \chi_r$. This perspective makes it straightforward to check that, given a morphism $f$ between finite abelian groups, there is a dual morphism $\hat{f}$ defined by $\rho \mapsto \hat{f}(\rho)=\rho\circ f$, such that:

\begin{prop}
The morphism $\hat{f}$ is a group morphism and $\chi_{\hat{f}(\rho)}=\chi_\rho \circ f$.
\end{prop}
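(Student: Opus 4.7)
The proposition splits into two routine verifications, both of which unfold directly from the definition of $\hat{f}$ and the identification $\hat{S}=\text{Hom}(S,U(1))$ established just before the statement. My plan is therefore to evaluate both sides of each claimed identity on an arbitrary element and then transport the group law on $\hat{S'}$ through composition with $f$.

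For the first claim I would take $\rho,\rho'\in \widehat{S'}$ and an arbitrary $s\in S$, and compute
\begin{align*}
\hat{f}(\rho+\rho')(s) = \bigl((\rho+\rho')\circ f\bigr)(s) = (\rho+\rho')\bigl(f(s)\bigr).
\end{align*}
The key input is the preceding remark that the binary operation of $\hat{S'}$ is pointwise multiplication of characters, i.e.\ $(\rho+\rho')(x)=\rho(x)\rho'(x)$ for every $x\in S'$. Applying this with $x=f(s)$ and recognising the resulting factors as $\hat{f}(\rho)(s)$ and $\hat{f}(\rho')(s)$ gives
\begin{align*}
\hat{f}(\rho+\rho')(s)=\rho(f(s))\,\rho'(f(s))=\bigl(\hat{f}(\rho)+\hat{f}(\rho')\bigr)(s),
\end{align*}
which is the morphism property, since $s$ was arbitrary.

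For the second claim I would rely on the fact, stated in the paragraph preceding the proposition, that irreducible representations of an abelian group are one-dimensional, so that the identification $\rho\leftrightarrow \chi_\rho$ is literally $\chi_\rho=\text{Tr}\circ\rho=\rho$ (viewed as a map into $U(1)$). Applying this identification both to $\rho\in\widehat{S'}$ and to $\hat{f}(\rho)\in\hat{S}$ turns the claim into the tautology $\chi_{\hat{f}(\rho)}=\hat{f}(\rho)=\rho\circ f=\chi_\rho\circ f$.

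There is no real obstacle here beyond keeping the notation coherent: the only subtlety is the simultaneous use of additive notation on $\hat{S},\widehat{S'}$ and multiplicative notation on $U(1)$, so I would make the first calculation explicit in order to make visible that ``addition'' in $\widehat{S'}$ is exactly the pointwise multiplication that allows the composition with $f$ to distribute. Once this is spelled out, both statements follow by a single line of computation each.
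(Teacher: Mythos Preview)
Your proposal is correct and essentially the same as the paper's proof: both verifications reduce to unfolding the definition $\hat{f}(\rho)=\rho\circ f$ and the pointwise-multiplication law on characters. The only cosmetic difference is the order---the paper establishes $\chi_{\hat{f}(\rho)}=\chi_\rho\circ f$ first (keeping the $\text{Tr}$ explicit) and then deduces the morphism identity from it, whereas you prove the morphism identity directly and treat the character formula as a tautology via the one-dimensionality of the representations.
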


\begin{proof}
The following holds, 
\begin{align*}
\chi_{\hat{f}(\rho)}(h)=\text{Tr}(\hat{f}(\rho)(h))= \text{Tr}(\rho(f(h)))=\chi_\rho (f(h))
\end{align*}
or $\chi_{\hat{f}(\rho)}=\chi_\rho \circ f$ as claimed. It follows also that,
\begin{align*}
\chi_{\hat{f}(\rho+\rho')} = \chi_{\rho+\rho'} \circ f = \\ 
= \chi_{\rho}\chi_{\rho'} \circ f = (\chi_{\rho} \circ f)(\chi_{\rho'}\circ f) = \\ 
= \chi_{\hat{f}(\rho)}\chi_{\hat{f}(\rho')} = \chi_{\hat{f}(\rho)+\hat{f}(\rho')}
\end{align*}
and $\chi_{\hat{f}(0)}=\chi_0 \circ f=1=\chi_0$ which implies $\hat{f}(\rho+\rho')=\hat{f}(\rho)+\hat{f}(\rho')$ and $\hat{f}(0)=0$.
\end{proof}

This technique will be used extensively in order to move freely between groups and representations throughout this paper. For general references we derive the reader to \cite{serre,fulton}.

\section{\label{sec:model}Model Definition}

For the remainder of this article we will use two chain complexes: $\left( C_{\bullet}, \partial^C_{\bullet} \right)$ related to the \textit{geometrical content} of the models, and $\left( G_{\bullet}, \partial^G_{\bullet} \right)$ related to the \textit{group theoretic} content of the models. We will assume the existence of sets $K_n$ such that each group $C_n$ is the free abelian group generated by $K_n$. This is inspired by thinking of the elements of $K_n$ as the $n$-dimensional building blocks of some topological space $K=\sqcup_n K_n$ so elements of $C_n$ are formal sums of such blocks. The morphisms $\partial^C_n$ then illustrate how to glue all of the pieces of $K$ by describing the boundary of some $x\in K_n$ as a formal sum of elements of $K_{n-1}$ as it is classically performed from a simplicial complex perspective (See Appendix \ref{ap:algtop}). Given the previous discussion, we then have:

\begin{defn}[Classical Gauge Configuration]\label{def:classconf}
A classical gauge configuration is an element $f \in \text{hom} (C,G)^0$.
\end{defn}

The latter is basically the statement that there is an assignment of group theoretic degrees of freedom to \textit{each building block} $ K_n \ni x \mapsto f_n \left( x \right) \in G_n$. Diagrammatically, we have:
\begin{figure}[h]
\begin{center}
\begin{adjustbox}{max size={.5\textwidth}{.5\textheight}}
\begin{tikzpicture}
  \matrix (m) [matrix of math nodes,row sep=2em,column sep=3em,minimum width=2em]
  {   \cdots & C_{n+1} & C_{n} & C_{n-1} & \cdots  \\
      \cdots  & G_{n+1} & G_{n} & G_{n-1} & \cdots \\};
  \path[-stealth]
    (m-1-1) edge node [above] {$\partial^C_{n+2} $} (m-1-2)
    (m-1-2) edge node [above] {$\partial^C_{n+1} $} (m-1-3)
            edge node [left] {$f_{n+1}$} (m-2-2)
    (m-1-3) edge node [above] {$\partial^C_{n}$} (m-1-4)
    		edge node [left] {$f_{n}$} (m-2-3)
    (m-1-4) edge node [above] {$\partial^C_{n-1}$} (m-1-5)       
    (m-1-4) edge node [right] {$f_{n-1}$} (m-2-4)
    (m-2-1) edge node [below] {$\partial^G_{n+2} $} (m-2-2)
    (m-2-2) edge node [below] {$\partial^G_{n+1} $} (m-2-3)
    (m-2-3) edge node [below] {$\partial^G_{n} $} (m-2-4)
    (m-2-4) edge node [below] {$\partial^G_{n-1} $} (m-2-5);
\end{tikzpicture}
\end{adjustbox}
\end{center}
\caption{\label{fig:classconf} A classical configuration $f \in \text{hom} (C,G)^0$.}
\end{figure}

Therefore, we can now define:
\begin{defn}[Hilbert Space]\label{def:HilbertSpace}
We define the Hilbert space $\mathcal{H}$ as
\begin{align*}
& \mathcal{H}:=\bigotimes_n \bigotimes_{x \in K_n} \mathbb{C} \left[ G_n \right]_x.
\end{align*}
where $\mathbb{C} \left[G_n\right]_x$ is the group algebra of $G_n$ associated to the building block $x \in K$.
\end{defn}

In other words, each Hilbert space $\mathbb{C}[G_n]_x$ is formed from $G_n$ by taking linear combinations of $g \in G_n$, so a general state can be written as:
\begin{align*}
\ket{\phi}=\sum_{g\in G_n} \phi(g) \ket{g} \, ,
\end{align*}
where $\{\ket{g}\}$ is an orthonormal basis of $\mathbb{C} \left[ G_n \right]_x$. Consequently, by Definition \ref{def:classconf}, any configuration is of the form:
\begin{align}\label{Hilbertbasis}
\ket{f}=\bigotimes_n \bigotimes_{x \in K_n} \ket{f_n \left( x \right)},
\end{align}
where now $\left\{ \ket{f} \right\}$ is an orthonormal basis of $\mathcal{H}$. Thus, any general state $\ket{\psi} \in \mathcal{H}$ can be written as the linear combination: 
\begin{align*}
\ket{\psi}=\sum_f \psi(f)\ket{f} \quad \text{with} \quad  f \in \text{hom} (C,G)^0 .
\end{align*}
Summing up, each map $f\in\text{hom}(C,G)^0$ corresponds to a classical configuration of a generalized gauge field, a \textit{higher gauge field}, while the states $\ket{\psi} \in \mathcal{H}$ correspond to quantum field configurations living on the space associated to $K$.

\begin{rem}\label{rem:finiteHS}
In order to avoid technical complications related to infinite dimensional Hilbert spaces we will only consider the case where $K_n$ is a finite set and only non-empty for a finite number of $n$'s. Additionally, all groups $G_n$ must be finite. These assumptions ensure that $\text{hom}(C,G)^p$ is a finite group for all $p$ with order given by:
\begin{align*}
|\text{hom}(C,G)^p|=\prod_n |\text{Hom}(C_n,G_{n-p})|=\prod_n |G_{n-p}|^{|K_n|}
\end{align*}
Particularly, the dimension of the Hilbert space is $ \text{dim}(\mathcal{H}) = | \text{hom} (C,G)^0 | < \infty $.
\end{rem}

\subsection{Operators}

For the rest of the paper we will denote by $\text{hom} (C,G)_p$ the dual groups of $\text{hom} (C,G)^p$. We begin by defining the elementary operators which are generalized versions of the well known \textit{Quantum Double Models} counterparts in their abelian version (See \cite{Kitaev2,Beigi,Andruskiewitsch,Takeuchi,Kitaev1,LevinWen,Bombin1,Mesaros,QDM} for an account on these topics).

\begin{defn}[Shift and Clock operators]\label{def:PQoperators}
Let $\ket{f} \in \mathcal{H}$ with $f \in \text{hom} \left( C, G \right)^0$. Given  $t \in \text{hom}\left( C, G \right)^0$ and $m \in \text{hom} \left( C, G \right)_0$, we define the operators:
\begin{align*}
P_t \ket{f} := \ket{f + t}, \quad \quad Q_m \ket{f} := \chi_m \left( f \right) \ket{f}
\end{align*}
called the \textit{shift} and \textit{clock} operators, respectively.
\end{defn}

It is straightforward to prove that, for $t, t_1 \text{ and } t_2 \in \text{hom}\left( C, G \right)^0$ and $m, m_1 \text{ and } m_2 \in \text{hom} \left( C, G \right)_0$, the relations:
\begin{align*} 
P_{t_1} P_{t_2} = P_{t_1 + t_2}&, \quad \quad Q_{m_1} Q_{m_2} = Q_{m_1 + m_2}, \\ Q_m & P_t = \chi_{m} \left( t \right) P_t Q_m.
\end{align*}
are satisfied.

\begin{rem}
Notice that an \textit{operator perspective only} study is possible in this formalism by means of the trivial map $0 \in \text{hom} \left( C, G \right)^0$ which can be used to write any $\ket{f} \in \mathcal{H}$ in the form $\ket{f} = P_f \ket{0}$.
\end{rem}

\begin{defn}[Generalized gauge transformations and holonomy measure]\label{def:AopBop}
For all  $t\in \text{hom}(C,G)^{-1}$ and $\ket{f}\in \mathcal{H}$, we define the generalized gauge transformation operator $A_t: \mathcal{H} \rightarrow \mathcal{H}$ as:
\begin{align*}
A_t \ket{f} : = \ket{f + \delta^{-1} t}, \; \text{  or equivalently } \; A_t := P_{\delta^{-1} t}.
\end{align*}
 Analogously, for all  $m \in \text{hom} (C,G)_{1}$ and $\ket{f}\in \mathcal{H}$, we define the holonomy measure operator $B_m: \mathcal{H} \rightarrow \mathcal{H}$ as:
\begin{align*}
B_m \ket{f} : = \chi_{\delta_1 m} \left( f \right) \ket{f}, \;  \text{  or equivalently } \; B_m := Q_{\delta_1 m}.
\end{align*}
\end{defn}

Moreover, the previous operators are easily shown to satisfy the relations:
\begin{align}
\nonumber A_t A_{t'}=A_{t+t'}=A_{t'} A_t \, \; & \; \, B_m B_{m'}=B_{m+m'}=B_{m'} B_m  \\
A_t B_m & = B_m A_t \quad , \label{ABalgebra}
\end{align}
see appendix \ref{ap:algrel} for a proof. Furthermore, we can define: 
\begin{defn}[Gauge equivalence]\label{def:gaugeequiv}
Let $\ket{f}, \ket{g} \in \mathcal{H}$. These states are said to be \textit{gauge equivalent} if there exists a $t\in \text{hom}(C,G)^{-1}$ such that $\ket{f}=A_t\ket{g}=\ket{g+\delta^{-1} t}$. 
\end{defn}

This relation is easily proven to define an \textit{equivalence class}, which  will play an important part in describing and characterizing the elements of the ground state subspace. We will leave this discussion for later sections while we continue discussing the formalism.

By using the operators defined in \ref{def:AopBop}, we can further construct:
\begin{defn}[Generalized projector operators]\label{def:ObOp}
Given $s \in \text{hom}(C,G)_{-1}, \, v\in \text{hom}(C,G)^1$ we define
\begin{align*}
\mathcal{A}_s :=& \dfrac{1}{|\text{hom}(C,G)^{-1}|}\sum_{t}\chi_s(t)A_t, \\
\mathcal{B}_v :=& \dfrac{1}{|\text{hom}(C,G)_{1}|}\sum_{m}\chi_m(v)B_m,
\end{align*}
where $t\in \text{hom}(C,G)^{-1}$ and $m \in \text{hom}(C,G)_1$.
\end{defn}
which are shown to satisfy the following relations:
\begin{lem}\label{lemma:AlgABOb}
For all $s,s' \in \text{hom}(C,G)_{-1}$ and $v,v'\in \text{hom}(C,G)^1$, the following relations for the generalized projector operators (Definition \ref{def:ObOp}) are satisfied:
\begin{enumerate}[(i)]
\item Pairwise commutation: 
\begin{align*}
\mathcal{A}_s \mathcal{B}_v &=\mathcal{B}_v \mathcal{A}_s.
\end{align*}

\item Orthogonality: 
\begin{align*}
\mathcal{A}_s \mathcal{A}_{s'} = \delta(s,s') \mathcal{A}_s, \quad  \mathcal{B}_v \mathcal{B}_{v'} = \delta(v,v') \mathcal{B}_v,
\end{align*}
where $\delta(\cdot,\cdot)$ is the Kronecker delta

\item Completeness:
\begin{align*}
\sum_s \mathcal{A}_s = \mathbb{1}, \quad \quad \sum_v \mathcal{B}_v = \mathbb{1},
\end{align*}
where $\mathbb{1}$ is the identity operator.
\end{enumerate}
\end{lem}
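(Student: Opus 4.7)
The three claims are largely independent, and each reduces to combining the algebraic relations in (\ref{ABalgebra}) with the orthogonality of characters of the finite abelian groups $\text{hom}(C,G)^{-1}$ and $\text{hom}(C,G)^{1}$. My plan is to dispatch (i) immediately and then to establish (ii) and (iii) for the $\mathcal{A}_s$ family; the arguments for the $\mathcal{B}_v$ family are structurally identical after exchanging the roles of a group and its Pontryagin dual and replacing $A_t \mapsto B_m$.

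For (i), I would expand both sides as double sums over $(t,m)\in\text{hom}(C,G)^{-1}\times\text{hom}(C,G)_1$ using Definition \ref{def:ObOp}. The scalar coefficients $\chi_s(t)\chi_m(v)$ multiply the corresponding operators independently of the ordering, so the commutation $A_t B_m = B_m A_t$ from (\ref{ABalgebra}) transfers term by term, yielding $\mathcal{A}_s\mathcal{B}_v=\mathcal{B}_v\mathcal{A}_s$.

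For (ii), I would expand $\mathcal{A}_s\mathcal{A}_{s'}$, apply $A_t A_{t'}=A_{t+t'}$, and change summation variables from $(t,t')$ to $(t,u)$ with $u=t+t'$. Using the multiplicativity of characters, $\chi_s(t)\chi_{s'}(u-t)=\chi_{s'}(u)\chi_{s-s'}(t)$, the double sum factorizes as
\begin{align*}
\mathcal{A}_s\mathcal{A}_{s'}=\frac{1}{|\text{hom}(C,G)^{-1}|^2}\sum_u \chi_{s'}(u)A_u\sum_t \chi_{s-s'}(t).
\end{align*}
The inner sum is the sum of a single character over the whole group, which by the standard orthogonality relation equals $|\text{hom}(C,G)^{-1}|\,\delta(s,s')$. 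Substituting reproduces $\delta(s,s')\,\mathcal{A}_s$. For (iii), interchanging the order of summation gives
\begin{align*}
\sum_s\mathcal{A}_s=\frac{1}{|\text{hom}(C,G)^{-1}|}\sum_t A_t\sum_s \chi_s(t),
\end{align*}
and the dual form of the same orthogonality relation yields $\sum_s \chi_s(t)=|\text{hom}(C,G)_{-1}|\,\delta(t,0)$. Since $|\text{hom}(C,G)_{-1}|=|\text{hom}(C,G)^{-1}|$ for any finite abelian group, the normalizations cancel and only $A_0=P_0=\mathbb{1}$ survives, yielding $\sum_s\mathcal{A}_s=\mathbb{1}$.

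The only real obstacle I anticipate is bookkeeping: keeping track of which index lives in $\text{hom}(C,G)^p$ versus its Pontryagin dual $\text{hom}(C,G)_p$, and ensuring that each pairing $\chi_\bullet(\bullet)$ is applied to arguments of the appropriate type (for instance, $s\in\text{hom}(C,G)_{-1}$ paired against $t\in\text{hom}(C,G)^{-1}$). Once that data is tracked cleanly, all three statements collapse into a single use of finite-abelian character orthogonality and the composition laws in (\ref{ABalgebra}).
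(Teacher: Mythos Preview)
Your proposal is correct. The paper does not give an explicit proof of this lemma in the main text; the closest it comes is the proof of the local analogue (Lemma~\ref{lemma:AlgAB} in Appendix~\ref{ap:algrel}), where the same change-of-variables and character-orthogonality manipulations you carry out on $\text{hom}(C,G)^{-1}$ are performed instead on each individual $G_{n+1}$, after which the global statement follows from the product decomposition $\mathcal{A}_s=\prod_{n,x}A_x^{s_n(x)}$ of Proposition~\ref{prop:ObOp}. Your direct global argument is the same computation without the intermediate localization step, and is arguably cleaner since it avoids the bookkeeping of the local basis $\{gx^*\}$ altogether.
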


Out of which the claim of being projector operators is justified. These projectors will show to be very helpful when studying both the ground state subspace, which we present here, and the excited state subspace.

\begin{rem}\label{rem:A0B0}
We highlight two particular cases of the aforementioned generalized projectors of Definition \ref{def:ObOp}:
\begin{enumerate}[(i)]
\item The generalized projector $\mathcal{A}_0 := \mathcal{A}_{s|s=0}$ , written as the sum:
\begin{align}\label{eq:projA0}
\mathcal{A}_0=\dfrac{1}{|\text{hom}(C,G)^{-1}|}\sum_t A_t, \quad t \in \text{hom}(C,G)^{-1} \;,
\end{align}
which projects any state $\ket{f} \in \mathcal{H}$ into an equal sum of gauge equivalent states and consequently, can be used to characterize two states $\ket{f},\ket{g}\in \mathcal{H}$ as being gauge equivalent if $\mathcal{A}_0\ket{f}=\mathcal{A}_0\ket{g}$, which is evident from Definition \ref{def:gaugeequiv}.

 \item The generalized $\mathcal{B}_0 := \mathcal{B}_{v|v=0}$ , written as the sum:
\begin{align}\label{eq:projB0}
\mathcal{B}_0=\dfrac{1}{|\text{hom}(C,G)_{1}|}\sum_m B_m, \quad m \in \text{hom}(C,G)_{1} \;,
\end{align}
which projects such states $\ket{f} \in \mathcal{H}$ that satisfy $f  \in \text{ker} (\delta^0)$. A relation which is easily obtained by using its definition and the characters orthogonality relations.
\end{enumerate}
\end{rem}

Concerning the generalized projectors of Definition \ref{def:ObOp} and the elementary operators of Definition \ref{def:PQoperators}, the following relations are easily proven to be satisfied:
\begin{lem}\label{lemma:algOPPQ}
Let $f \in \text{hom}(C,G)^0$ and $m\in \text{hom}(C,G)_0$ be arbitrary elements. It holds:
\begin{enumerate}[(i)]
\item $\quad P_f\mathcal{B}_v= \mathcal{B}_{v+\delta^0 f}P_f$, where $v \in \text{hom}(C,G)^1$.
\item $\quad \mathcal{A}_s Q_m = Q_m \mathcal{A}_{s+\delta_0 m}$, where $s \in \text{hom}(C,G)_{-1}$.
\end{enumerate}
\end{lem}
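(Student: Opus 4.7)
The plan is to prove both identities by expanding the generalized projectors according to Definition \ref{def:ObOp}, sliding $P_f$ or $Q_m$ past the elementary operators $B_m = Q_{\delta_1 m}$ and $A_t = P_{\delta^{-1} t}$ using the Weyl-type braiding $Q_{m'} P_t = \chi_{m'}(t)\, P_t Q_{m'}$ from Definition \ref{def:PQoperators}, and then reassembling the resulting character sum as another projector with shifted index. The two items are structurally dual: (i) commutes a shift through a sum of clocks, while (ii) commutes a clock through a sum of shifts.

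For (i), I would write
\begin{equation*}
P_f \mathcal{B}_v = \frac{1}{|\text{hom}(C,G)_1|} \sum_{m \in \text{hom}(C,G)_1} \chi_m(v)\, P_f Q_{\delta_1 m},
\end{equation*}
use the braiding relation with $t = f$ and $m' = \delta_1 m$ to push $P_f$ to the right, and invoke the dual-morphism identity from the proposition on $\widehat{f}$: since $\delta_1$ is the dual of $\delta^0$, one has $\chi_{\delta_1 m}(f) = \chi_m(\delta^0 f)$. The phase produced by the commutation therefore takes the form $\chi_m(\pm \delta^0 f)$, and multiplying it into $\chi_m(v)$ via the homomorphism property $\chi_m(v)\chi_m(\delta^0 f) = \chi_m(v + \delta^0 f)$ shifts the label on the projector. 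The sum then collapses back to $\mathcal{B}_{v+\delta^0 f} P_f$.

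Item (ii) is proved by the mirror computation: expand $\mathcal{A}_s = \tfrac{1}{|\text{hom}(C,G)^{-1}|} \sum_t \chi_s(t) P_{\delta^{-1} t}$, commute $Q_m$ across each $P_{\delta^{-1} t}$ producing a phase $\chi_m(\delta^{-1} t) = \chi_{\delta_0 m}(t)$ (this is again the dual morphism identity, now applied to $\delta^{-1}$), and absorb it into the sum by $\chi_s(t)\chi_{\delta_0 m}(t) = \chi_{s + \delta_0 m}(t)$, which recognises the result as $Q_m \mathcal{A}_{s + \delta_0 m}$.

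The main obstacle is not conceptual but purely a matter of sign bookkeeping: the Koszul sign $(-1)^p$ in the definition of $\delta^p$ (Definition \ref{def:InternalHom}), the direction chosen for the braiding relation (i.e.\ whether one writes $Q_m P_t$ or $P_t Q_m$ on the right), and the convention for dualising (whether $\widehat{\delta}$ corresponds to $\chi_m \circ \delta$ with or without an extra sign) each contribute a potential sign. Pinning these conventions down once and for all so that they match the statement of the lemma is the only delicate step; once done, each identity reduces to a two-line character-sum manipulation.
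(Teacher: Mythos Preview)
Your approach is correct and is exactly the computation the paper has in mind: the paper does not actually write out a proof of this lemma, merely stating that the relations ``are easily proven to be satisfied''. The expansion of $\mathcal{B}_v$ and $\mathcal{A}_s$ via Definition~\ref{def:ObOp}, followed by the Weyl braiding $Q_{m'}P_t=\chi_{m'}(t)P_tQ_{m'}$ and the dual-pairing identity $\chi_{\delta_1 m}(f)=\chi_m(\delta^0 f)$ (respectively $\chi_m(\delta^{-1}t)=\chi_{\delta_0 m}(t)$), is precisely the intended argument---the same manipulation appears verbatim later in the proofs of Proposition~\ref{prop:GS1} and Proposition~\ref{prop:measureops}. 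Your caveat about sign bookkeeping is well taken: the direction in which one commutes and the precise convention for the dualised coboundary $\delta_p$ determine whether one lands on $v+\delta^0 f$ or $v-\delta^0 f$ (and similarly for $s\pm\delta_0 m$), and the paper itself is not fully explicit about this convention.
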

It will be seen later in section \ref{sec:gsd} that the above relations will help with the characterization of the ground state subspace $\mathcal{H}_0\subset \mathcal{H}$.

\subsection{Local Operators and Dynamics}

In order to define a local Hamiltonian, and therefore the dynamics of these models, we can use the projectors of Definition \ref{def:ObOp} to write the corresponding \textit{local gauge transformations} and \textit{holonomy measurements} in a somewhat natural way.  Hence, we need local maps in $\text{hom}(C,G)_{-1}$ and $\text{hom}(C,G)^1$ that are only non trivial at basis elements $x \in K_n$ and their neighborhoods:
\begin{defn}[Localized maps]\label{def:locmaps}
Let $x\in K_n$ and $g \in G_{n-p}$, $r \in \hat{G}_{n+p}$. The local maps $gx^* \in \text{hom}(C,G)^{p}$ and $rx_* \in \text{hom}(C,G)_{-p}$ are defined by:
\begin{align*}
(gx^*) (y)  :=& \begin{cases} g, \text{ if } x=y \\ 0, \text{ otherwise } \end{cases}   \text{and} \\ \quad (rx_\ast)(f) :=& r(f_n(x)).
\end{align*}
\end{defn}

Appendix \ref{ap:localdec} contains supplementary material regarding these local maps and some important technical results. The locality of these maps is evident since:
\begin{align}
\nonumber \chi_{s} \left( g x^* \right) & = \chi_{s_n (x)} \left( g \right) \, , \;\; \, g \in G_{n-p} \,, s_n \left( x\right) \in \hat{G}_{n-p} \\ 
\chi_{r x_*} \left( v \right) & = \chi_{r} \left( v_n (x) \right)  \, , \;\; r \in \hat{G}_{n-p} \, , v_n \left( x \right) \in G_{n-p} \label{pairings}
\end{align}
which has been obtained using the expansions \ref{texp} and \ref{sexp}. Hence, we can define:

\begin{defn}[Local projector operators]\label{def:AxBx}
Given $x\in K_n,g \in G_{n-1}$, $r \in \hat{G}_{n+1}$. We define the \textit{local gauge projector} and \textit{local holonomy projector} as:
\begin{align*}
A_{x}^{r} &:= \mathcal{A}_{r x_*}=\frac{1}{|G_{n+1}|}\sum_{g\in G_{n+1}} \chi_r \left( g \right) A_{gx^*} \; \quad \text{and} \\  
B_{x}^g &:=\mathcal{B}_{gx^*}=\frac{1}{|G_{n-1}|}\sum_{r\in \hat{G}_{n-1}} \chi_{r} \left( g\right) B_{rx_*} \;,
\end{align*}
where the last equality follows directly from the definition of local maps and \ref{AtBsexp2}.
\end{defn}

To make the connection with known models evident, note that the operator $A_{gx^*}$ in the above definition is the one that performs local gauge transformations with parameter $g\in G_{n+1}$ around $x\in K_n$. On the other hand, the operator $B_{rx_*}$ is diagonal with eigenvalues obtained by applying the representation to the value of the $n$-th fake-holonomy along the boundary of $x\in K_n$. Moreover, it's straight forward to obtain the following relations:

\begin{enumerate}[(i)]
\item Pairwise commutation 
\begin{align*}
A_x^r A_y^{r'} = A_y^{r'} A_x^r \quad &; \quad B_x^g B_y^{g'} =B_y^{g'} B_x^g; \\ 
A_x^r B_y^g & = B_y^g A_x^r.
\end{align*}

\item Orthogonality
\begin{align*}
A_x^r A_x^{r'} = \delta(r,r')A_x^r \quad , \quad B_y^g B_y^{g'} =\delta( g, g') B_x^{g},
\end{align*}
where $\delta \left( \cdot,\cdot \right)$ is the Kronecker delta.

\item Completeness
\begin{align*}
\sum_{r \in \hat{G}_{n+1}} A^r_x = \mathbb{1}, \quad \quad \sum_{g \in G_{n-1}} B^g_x = \mathbb{1},
\end{align*}
where $\mathbb{1}$ is the identity operator.
\end{enumerate}

We can see that the local projector operators are mutually commuting and therefore they can be used to define the following Hamiltonian:

\begin{defn}[Hamiltonian Operator]\label{def:Hamiltonian}
We define the Hamiltonian operator $H:\mathcal{H} \rightarrow \mathcal{H}$ to be:
\begin{align*}
H :=-\sum_n \sum_{x\in K_n} A_x^0 -\sum_n \sum_{y\in K_n} B_y^0.
\end{align*}
\end{defn}

This Hamiltonian enforces two kinds of constraints on the ground states. The first kind, related to $A_x^0$, implies ground states must be gauge invariant. The second kind, related $B_x^0$, projects to the trivial fake holonomy sector. As we shall see in the next section, this leads to the topological features of the model and, in particular, to a topological degeneracy for the ground state states.

\section{\label{sec:gsd}Ground States}

We begin by discussing preliminary considerations concerning a suitable description of the ground state subspace $\mathcal{H}_{0} \subset \mathcal{H}$ which, by means of Definition \ref{def:Hamiltonian}, can be characterized by the relation:
\begin{align}\label{eq:GSeq} 
\mathcal{H}_0 := \left\{ \ket{\Psi} \in \mathcal{H} \,|\, {A}^{0}_x \ket{\Psi} = {B}^{0}_x \ket{\Psi} = \ket{\Psi} \right\},
\end{align}
for all $x \in K_n$. Furthermore, it can be easily shown that $\mathcal{H}_0$ is non empty. Nonetheless, as we stressed in the previous section, we can also study the ground state space by using the operators already defined. Moving towards that direction, we define:
\begin{defn}[Ground state projector operator]\label{def:P0}
We call $\Pi_0 : \mathcal{H} \rightarrow \mathcal{H}_0$ the ground state projector operator, defined as:
\begin{align*}
\Pi_0 := \mathcal{A}_{0} \cdot \mathcal{B}_{0} \;,
\end{align*}
where $\mathcal{A}_0 = \mathcal{A}_{s|s=0} $ and $\mathcal{B}_0=\mathcal{B}_{v|v=0}$, are as in Definition \ref{def:ObOp}.
\end{defn}

It is clear that studying the ground state subspace using the characterization (\ref{eq:GSeq}) is equivalent to that of projecting general states using $\Pi_0$. This follows from the equivalent form of the projectors $\mathcal{A}_s$ and $\mathcal{B}_v$ given in Proposition \ref{prop:ObOp}. This last discussion can be formalized in the following Proposition:

\begin{prop}[Ground states]\label{prop:GS}
A state $\ket{\Psi} \in \mathcal{H}$ is a ground state if and only if satisfies
\begin{align*}
\Pi_0 \ket{\Psi} = \ket{\Psi} \;\;,
\end{align*}
where $\Pi_0$ is the ground state projector defined in \ref{def:P0}.
\end{prop}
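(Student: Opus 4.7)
The plan is to chain together two equivalences: first, that $\Pi_0 \ket{\Psi} = \ket{\Psi}$ iff $\mathcal{A}_0 \ket{\Psi} = \mathcal{B}_0 \ket{\Psi} = \ket{\Psi}$; second, that $\mathcal{A}_0 \ket{\Psi} = \ket{\Psi}$ is equivalent to $A_x^0 \ket{\Psi} = \ket{\Psi}$ for every $x$, and likewise that $\mathcal{B}_0 \ket{\Psi} = \ket{\Psi}$ is equivalent to $B_x^0 \ket{\Psi} = \ket{\Psi}$ for every $x$. Combining these yields precisely the definition of $\mathcal{H}_0$ in (\ref{eq:GSeq}). The first equivalence is immediate: by Lemma \ref{lemma:AlgABOb}(i)--(ii) the operators $\mathcal{A}_0$ and $\mathcal{B}_0$ are commuting projectors, hence $\Pi_0 = \mathcal{A}_0 \mathcal{B}_0$ is itself a projector whose image is $\text{Im}(\mathcal{A}_0)\cap\text{Im}(\mathcal{B}_0)$, and a vector is fixed by the product iff it is fixed by each factor.

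For the second equivalence, I would first recognize $\mathcal{A}_0$ via Remark \ref{rem:A0B0} as the normalized average $\frac{1}{|\text{hom}(C,G)^{-1}|}\sum_{t} A_t$ of a unitary action of the finite abelian group $\text{hom}(C,G)^{-1}$, which by the standard representation-theoretic averaging argument projects onto the simultaneous $+1$ eigenspace of all $A_t$. Hence $\mathcal{A}_0 \ket{\Psi} = \ket{\Psi}$ is equivalent to $A_t \ket{\Psi} = \ket{\Psi}$ for every $t$. Using the decomposition of a general $t$ into localized contributions $gx^*$ (Definition \ref{def:locmaps}) together with the homomorphism property $A_{t_1+t_2} = A_{t_1}A_{t_2}$ from (\ref{ABalgebra}), this reduces to invariance under every $A_{gx^*}$. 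Now Definition \ref{def:AxBx} exhibits $A_x^0$ as the averaging operator of the finite abelian subgroup generated by $\{A_{gx^*}\}_{g\in G_{n+1}}$, so $A_x^0 \ket{\Psi} = \ket{\Psi}$ for every $x$ is in turn equivalent to $A_{gx^*}\ket{\Psi} = \ket{\Psi}$ for every $g$ and $x$, closing the loop. The parallel argument for $\mathcal{B}_0$ and $B_x^0$ proceeds verbatim, using the diagonal $Q_m$ operators and the localized characters $rx_*$.

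The main obstacle is bookkeeping in the global-to-local passage: one must verify that the decomposition of arbitrary gauge transformations into local pieces is compatible with both the operator algebra and the averaging, which ultimately rests on the product decomposition $\text{hom}(C,G)^{-1} = \prod_n \text{Hom}(C_n, G_{n+1})$ and the technical material in Appendix \ref{ap:localdec}. Once this reduction from $\mathcal{A}_0$ to the family $\{A_x^0\}_x$ and from $\mathcal{B}_0$ to $\{B_x^0\}_x$ is carried out carefully, the rest is a direct combination of the orthogonality and completeness relations of Lemma \ref{lemma:AlgABOb}.
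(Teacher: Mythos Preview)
Your proposal is correct and follows essentially the same route as the paper. The paper's argument is actually terser: it does not spell out a proof but simply invokes Proposition~\ref{prop:ObOp} (Appendix~\ref{ap:localdec}), which gives the product factorizations $\mathcal{A}_0=\prod_{n}\prod_{x\in K_n}A_x^0$ and $\mathcal{B}_0=\prod_{n}\prod_{x\in K_n}B_x^0$, so that $\Pi_0$ is a product of commuting projectors and fixes $\ket{\Psi}$ iff each factor does. Your detour through the averaging interpretation of $\mathcal{A}_0$ and the local basis $\{gx^*\}$ is a perfectly valid way to re-derive that factorization, just slightly longer than citing it directly.
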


The latter is the characterization we were looking for and represents the ground states as the eigenvectors of the $\Pi_0$ operator with unitary eigenvalues. However, if a more physical point of view is desired, we can understand $\Pi_0$ as performing the following simultaneous actions over a general state $\ket{f} \in \mathcal{H}$: 
\begin{enumerate}[(i)]
\item  $\mathcal{A}_0$ projects a state of $\mathcal{H}$ into an equal weight sum of gauge equivalent states, as discussed in Remark \ref{rem:A0B0}.

\item $\mathcal{B}_0$ projects to configurations with trivial $n$ fake-holonomy locally. This can be seen when using the decomposition \ref{ap:localdec}, to write:
\begin{align*}
\mathcal{B}_0=\prod_n\prod_{x \in K_n}B^{0}_{x},
\end{align*}
where $B^{0}_{x}$ is as in Definition \ref{def:AxBx}. Therefore, any state $\ket{f}$ is invariant under the action of $\mathcal{B}_0$ only if it is invariant under the action of all $B_x^{0}$ for all $x \in K_n$. 
\end{enumerate}

From here is immediate that:
\begin{prop}
The ground state subspace $\mathcal{H}_0 \subset \mathcal{H}$ in not empty.
\end{prop}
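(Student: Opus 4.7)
The plan is to exhibit a concrete nonzero state in $\mathcal{H}_0$ by applying the ground-state projector $\Pi_0 = \mathcal{A}_0 \cdot \mathcal{B}_0$ (Definition \ref{def:P0}) to the basis vector associated with the trivial gauge configuration. Since by Proposition \ref{prop:GS} the ground state subspace coincides with the image of $\Pi_0$, it suffices to produce one basis state $\ket{f} \in \mathcal{H}$ for which $\Pi_0\ket{f} \neq 0$.

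The natural candidate is the trivial map $0 \in \text{hom}(C,G)^0$ and its associated basis state $\ket{0} \in \mathcal{H}$. First I would evaluate $\mathcal{B}_0\ket{0}$: from Definition \ref{def:AopBop}, $B_m\ket{0} = \chi_{\delta_1 m}(0)\ket{0} = \ket{0}$ for every $m \in \text{hom}(C,G)_1$, since characters are group morphisms and vanish on the neutral element to give $1$. Averaging over $m$ as in (\ref{eq:projB0}) then yields $\mathcal{B}_0\ket{0} = \ket{0}$. Next I would compute
\begin{align*}
\Pi_0\ket{0} = \mathcal{A}_0\ket{0} = \frac{1}{|\text{hom}(C,G)^{-1}|}\sum_{t \in \text{hom}(C,G)^{-1}} \ket{\delta^{-1} t},
\end{align*}
using (\ref{eq:projA0}) and $A_t = P_{\delta^{-1}t}$.

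To conclude that this state is nonzero, I would invoke the linear independence of the basis $\{\ket{f}\}_{f \in \text{hom}(C,G)^0}$ of $\mathcal{H}$ coming from (\ref{Hilbertbasis}): the term $t = 0$ contributes the vector $\ket{0}$ with positive coefficient $1/|\text{hom}(C,G)^{-1}|$, and no cancellation is possible across distinct basis vectors, so $\Pi_0\ket{0} \neq 0$. By Proposition \ref{prop:GS}, $\Pi_0\ket{0}$ is a genuine element of $\mathcal{H}_0$, establishing that the ground state subspace is nontrivial.

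There is essentially no obstacle of substance here; the only point requiring a little care is that one must argue via the orthonormal basis of $\mathcal{H}$ rather than, say, by summing coefficients — a naive sum could in principle vanish, but linear independence of the basis states rules this out. Note also that this argument in fact produces a gauge-invariant, trivial-holonomy superposition, which is the prototype of the ground states classified in the main theorem of Section \ref{sec:gsd}.
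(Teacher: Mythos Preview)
Your proof is correct and follows essentially the same route as the paper: both exhibit the state $\mathcal{A}_0\ket{0}$ (the paper calls it $\ket{0_G}$) and verify it lies in $\mathcal{H}_0$. The paper simply asserts that $\Pi_0\ket{0_G}=\ket{0_G}$ and that this state is ``always non zero'', whereas you actually justify both steps --- computing $\mathcal{B}_0\ket{0}=\ket{0}$ explicitly and arguing nonvanishing via the positivity of all coefficients in the basis expansion --- so your version is, if anything, more complete.
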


\begin{proof}
We prove this by construction. Applying the operator $\Pi_0$ over the state
\begin{align}\label{def:0G}
\ket{0_G} :=  \mathcal{A}_0 \ket{0} \,\,\,\,\,\, \text{gives} \,\,\,\,\,\, \Pi_0 \ket{0_G} = \ket{0_G},
\end{align}
which, by Proposition \ref{prop:GS}, proves that $\ket{0_G} \in \mathcal{H}_0$. Moreover, this state is always non zero and hence $\mathcal{H}_0 \neq \emptyset$, since $\mathcal{H}_0$ always has at least one element. 
\end{proof}

A comment about the state $\ket{0_G}$ is worth at this point. Notice that by means of Eq. (\ref{eq:projA0}) we can understand it as a superposition of all basis states that are \textit{gauge equivalent} to the state $\ket{0}$. This state will have an important role in the following section.

\subsection{\label{subsec:GSD} Ground State Degeneracy}

Up to now, we have shown that $\mathcal{H}_0$ in non empty. Moreover, we can prove that if this subspace is degenerate, its degeneracy is topological in a sense that will be clear below. Let us begin by proving the following proposition:

\begin{prop}\label{prop:GS1}
Let $f \in \text{hom}(C,G)^0$. The ground state subspace $\mathcal{H}_0 \subset \mathcal{H}$ is composed by states of the form $\ket{f_G} := \mathcal{A}_0 \ket{f}$. Moreover, $\ket{f_G} \in \mathcal{H}_0$ if and only if, $f \in \text{ker} \left( \delta^0 \right)$.
\end{prop}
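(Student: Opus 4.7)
The plan is to combine Proposition \ref{prop:GS} (ground states are the fixed points of $\Pi_0 = \mathcal{A}_0\mathcal{B}_0$) with the key observation of Remark \ref{rem:A0B0}(ii) that on a configuration basis state one has $\mathcal{B}_0\ket{f} = \ket{f}$ when $f \in \ker(\delta^0)$ and $\mathcal{B}_0\ket{f} = 0$ otherwise. That identity itself rests only on character orthogonality together with the duality $\chi_{\delta_1 m}(f)=\chi_m(\delta^0 f)$. Given these two inputs, both halves of the proposition reduce to short manipulations of the commuting, idempotent projectors $\mathcal{A}_0$ and $\mathcal{B}_0$ from Lemma \ref{lemma:AlgABOb}.

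I would start with the ``if and only if'' clause, which is the more substantive half. For the ``if'' direction, assume $f\in\ker(\delta^0)$ and compute, using $[\mathcal{A}_0,\mathcal{B}_0]=0$ and $\mathcal{A}_0^2=\mathcal{A}_0$,
\begin{align*}
\Pi_0\ket{f_G} = \mathcal{A}_0\mathcal{B}_0\mathcal{A}_0\ket{f} = \mathcal{A}_0^2\,\mathcal{B}_0\ket{f} = \mathcal{A}_0\ket{f} = \ket{f_G},
\end{align*}
so $\ket{f_G}\in\mathcal{H}_0$ by Proposition \ref{prop:GS}. Conversely, if $f\notin\ker(\delta^0)$ then $\mathcal{B}_0\ket{f}=0$ and the same manipulation yields $\Pi_0\ket{f_G}=0$; for $\ket{f_G}$ to lie in $\mathcal{H}_0$ we would then need $\ket{f_G}=0$. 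But by (\ref{eq:projA0}) the vector $\mathcal{A}_0\ket{f}=|\text{hom}(C,G)^{-1}|^{-1}\sum_t\ket{f+\delta^{-1}t}$ is a strictly positive combination of orthonormal basis vectors, hence nonzero---a contradiction.

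For the ``composed by'' clause, I would pick an arbitrary $\ket{\Psi}\in\mathcal{H}_0$, expand it in the configuration basis as $\ket{\Psi}=\sum_f\psi(f)\ket{f}$, and apply Proposition \ref{prop:GS}:
\begin{align*}
\ket{\Psi} &= \Pi_0\ket{\Psi} = \mathcal{A}_0\mathcal{B}_0\sum_f \psi(f)\ket{f} \\
&= \mathcal{A}_0\!\!\sum_{f\in\ker(\delta^0)}\!\!\psi(f)\ket{f} = \!\!\sum_{f\in\ker(\delta^0)}\!\!\psi(f)\,\ket{f_G},
\end{align*}
displaying $\ket{\Psi}$ as a linear combination of the required form.

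I do not anticipate a serious obstacle: once the $\mathcal{B}_0$-projection identity is in place, the rest is bookkeeping with commuting projectors. The one detail I would be careful to spell out, rather than gloss over, is the non-vanishing of $\mathcal{A}_0\ket{f}$ in the converse, since without it the equation $\Pi_0\ket{f_G}=0$ does not actually contradict $\ket{f_G}\in\mathcal{H}_0$.
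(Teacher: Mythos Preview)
Your proof is correct and in substance matches the paper's argument: both rest on the commutation $[\mathcal{A}_0,\mathcal{B}_0]=0$ together with the fact that $\mathcal{B}_0$ acts on configuration states as the indicator of $\ker(\delta^0)$. The organization differs slightly. For the ``if and only if'' clause the paper rewrites $\ket{f_G}=P_f\ket{0_G}$ and commutes each $B_s$ past $P_f$ to extract the factor $\chi_s(\delta^0 f)$, whereas you invoke Remark~\ref{rem:A0B0}(ii) directly on $\ket{f}$; these are two packagings of the same character-orthogonality computation. For the ``composed by'' clause your explicit basis expansion is arguably cleaner than the paper's somewhat informal appeal to $P_\Psi$. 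Your emphasis on the non-vanishing of $\mathcal{A}_0\ket{f}$ is well placed: without it the equation $\Pi_0\ket{f_G}=0$ would not contradict membership in $\mathcal{H}_0$, and the paper leaves this point implicit.
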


\begin{proof}
We prove this proposition in two steps:
\begin{enumerate}[(i)]
\item Let us take a general state $\ket{\Psi} \in \mathcal{H}$. A quick calculation shows that:
\begin{align*}
\Pi_0 \ket{\Psi} = \mathcal{B}_0 P_{\Psi} \ket{0_G} \simeq \ket{\Psi} = P_{\Psi} \ket{0}
\end{align*}
by the characterization of the ground state \ref{prop:GS}, where $\ket{0_G}$ has been defined in Eq. (\ref{def:0G}). The latter is only true if $P_{\Psi}$ is proportional to $\mathcal{A}_0$, as it was to be shown.

\item We then take our prototype state to be $\ket{f_G} := \mathcal{A}_0 \ket{f}$. Hence:
\begin{align}\label{eq:Pi0fG}
\Pi_0 \ket{f_G} = \mathcal{B}_0 P_f \ket{0_G} \simeq \ket{f_G} \;.
\end{align}
Now, when using expansion \ref{eq:projB0}:
\begin{align*}
\ket{f_G} & = \dfrac{1}{|\text{hom}(C,G)_{1}|}\sum_s B_s P_f \ket{0_G} \\ 
& =  \dfrac{1}{|\text{hom} (C,G)_{1}|}\sum_s \chi_{\delta_1  s} \left( f \right) P_f B_s  \ket{0_G} \\ 
& = \left(  \dfrac{1}{|\text{hom}(C,G)_{1}|}\sum_s \chi_{ s} \left( \delta^0 f \right) \right) \ket{f_G}
\end{align*}
noticing that the operator $B_s$ is diagonal over $\ket{0_G}$. The Proposition then holds when using the orthonormal relations for the characters, forcing $\delta^0 f = 0$, or equivalently $f \in \text{ker} \left( \delta^0 \right)$.
\end{enumerate}
\end{proof}

We stress that Proposition \ref{prop:GS1} is a complete characterization of $\mathcal{H}_0 \subset \mathcal{H}$, since every configuration is exhausted by the group $f \in \text{hom}(C,G)^0$. However, we can still refine this characterization even further:

\begin{prop}\label{prop:GS2}
The states $\{ \ket{f_G} = \mathcal{A}_0 \ket{f} \; | \; f \in \text{ker} \left( \delta^0 \right) \}$ are a basis for the ground state subspace and are in one-to-one correspondence with elements of $H^0 (C, G)$.
\end{prop}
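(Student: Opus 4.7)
The plan is to combine the characterization already given in Proposition \ref{prop:GS1} with the gauge equivalence picture from Remark \ref{rem:A0B0}, and then quotient out the redundancy. By Proposition \ref{prop:GS1}, every element of $\mathcal{H}_0$ is a linear combination of states of the form $\ket{f_G} = \mathcal{A}_0 \ket{f}$ with $f \in \ker(\delta^0)$, so the claimed family is spanning. What remains is to identify when two such generators coincide and to verify that inequivalent ones are linearly independent.

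First I would show that $\ket{f_G} = \ket{f'_G}$ if and only if $f - f' \in \operatorname{im}(\delta^{-1})$. Indeed, $\mathcal{A}_0$ acts on the basis element $\ket{f}$ as an equal-weight sum over the gauge orbit:
\begin{align*}
\ket{f_G} = \frac{1}{|\text{hom}(C,G)^{-1}|} \sum_{t \in \text{hom}(C,G)^{-1}} \ket{f + \delta^{-1} t}.
\end{align*}
Thus $\ket{f_G}$ depends only on the coset $f + \operatorname{im}(\delta^{-1})$, and if two cosets are distinct their orbits are disjoint, so the two sums involve disjoint subsets of the orthonormal basis $\{\ket{f}\}$ of $\mathcal{H}$ and are therefore themselves orthogonal (in particular nonzero and linearly independent). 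This is precisely the gauge-equivalence criterion recorded in Remark \ref{rem:A0B0}.

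Combining these facts, the assignment $f \mapsto \ket{f_G}$ descends to a well-defined map
\begin{align*}
\ker(\delta^0)/\operatorname{im}(\delta^{-1}) = H^0(C,G) \longrightarrow \mathcal{H}_0, \qquad [f] \longmapsto \ket{f_G},
\end{align*}
which is injective by the orthogonality observation above and surjective onto a spanning set by Proposition \ref{prop:GS1}. Hence $\{\ket{f_G}\}$ indexed by $[f] \in H^0(C,G)$ is a basis of $\mathcal{H}_0$, establishing the one-to-one correspondence with cohomology classes.

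The only subtle point is checking well-definedness and independence simultaneously via the orbit picture; once one recognizes $\mathcal{A}_0 \ket{f}$ as the uniform superposition over the gauge orbit of $f$, both statements reduce to the elementary fact that distinct cosets of a subgroup partition the ambient group into disjoint pieces. No additional input beyond Proposition \ref{prop:GS1}, Remark \ref{rem:A0B0} and the definition $H^0(C,G) = \ker(\delta^0)/\operatorname{im}(\delta^{-1})$ is needed.
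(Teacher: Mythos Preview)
Your proof is correct and follows essentially the same route as the paper: both use Proposition~\ref{prop:GS1} for spanning, invoke the gauge-orbit description of $\mathcal{A}_0\ket{f}$ from Remark~\ref{rem:A0B0} to identify when two generators coincide, and then pass to the quotient $\ker(\delta^0)/\operatorname{im}(\delta^{-1})=H^0(C,G)$. Your version is slightly more explicit in that you spell out the orthogonality of states coming from distinct cosets (via disjointness of orbits), whereas the paper handles linear independence more implicitly by projecting the orthonormal basis $\{\ket{f}\}$ through $\Pi_0$; but this is a difference in presentation, not in substance.
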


\begin{proof}
It is clear from Proposition \ref{prop:GS1} that $\ket{f_G} = \mathcal{A}_0 \ket{f} \, \, | \,\, f \in \text{ker} (\delta^0) \,\, \in\mathcal{H}_0$. Also, by Remark \ref{rem:A0B0}, two such states $\ket{f_G}, \ket{g_G}$ satisfy $\ket{f_G} = \ket{g_G}$  if and only if, $f - g \in \text{im} \left( \delta^{-1} \right)$. Hence, the equivalence class $\left[ f \right] \in H^0 (C,G) = \text{ker} \left( \delta^0 \right) / \text{im} \left( \delta^{-1} \right)$ with representative $\ket{f_G}$ is well defined and furthermore, we have a bijection between $\{ \ket{f_G} | f \in \text{ker} \left( \delta^0 \right) \}$ and $H^0 (C,G)$. Projecting the basis $\{ \ket{f} \}$ of $\mathcal{H}$ into a basis of $\mathcal{H}_0$ using the ground state projector $\Pi_0$ leads to $\{ \Pi_0 \ket{f} \; | \; \Pi_0 \ket{f} \neq 0  \} =\{ \ket{f_G} \;| \; \mathcal{B}_0 \ket{f_G} \neq 0 \} = \{ \ket{f_G} \; | \; f \in \text{ker} \left( \delta^0 \right) \} $, and the result follows.
\end{proof}

Proposition \ref{prop:GS2} is quite far reaching, since also implies that for each cohomology class $[f] \in H^0 (C,G)$, there is a well defined operator
\begin{align}\label{Pfcom}
P_{[f]} := \prod_{g \sim f} P_g =P_f \mathcal{A}_0,
\end{align}
where $g \sim f$ denotes gauge equivalence, that creates the ground states from the state $\ket{0}$, this last observation allows us to state the main result of this paper. Given a finite complex $K$, its associated chain complex $\left( C_{\bullet}(K), \partial^C_{\bullet} \right)$, a graded group $\{ G_{\bullet} \}$, its associated chain complex of finite abelian groups $\left( G_{\bullet} , \partial^G_{\bullet} \right)$ and $\mathcal{H}$ the Hilbert space of definition \ref{def:HilbertSpace} with Hamiltonian $H:\mathcal{H} \rightarrow \mathcal{H}$ as defined in \ref{def:Hamiltonian}. The following theorem follows:

\begin{thm}[Dimension of the ground state subspace]\label{thm:main}
The dimension of the ground state subspace $\mathcal{H}_0$ (GSD) is given by: 
\begin{align*}
\text{GSD} = |H^0(C,G)| \cong \prod_n |H^n(C,H_n(G))|
\end{align*}
\end{thm}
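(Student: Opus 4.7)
The theorem splits naturally into two equalities, and both of the ingredients needed have essentially been established earlier in the paper, so the plan is to assemble them carefully rather than to discover new machinery.

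For the first equality, $\text{GSD}=|H^0(C,G)|$, the plan is to invoke Proposition \ref{prop:GS2} directly. That proposition already exhibits a basis of $\mathcal{H}_0$, namely $\{\,|f_G\rangle=\mathcal{A}_0|f\rangle : f\in\ker(\delta^0)\,\}$, together with a bijection between this basis and $H^0(C,G)=\ker(\delta^0)/\mathrm{im}(\delta^{-1})$. So one only has to remark that the dimension of $\mathcal{H}_0$ is the cardinality of any basis, and combine with finiteness (guaranteed by Remark \ref{rem:finiteHS}, which ensures $\mathrm{hom}(C,G)^p$ is a finite group in every degree $p$, hence so is every subquotient such as $H^0(C,G)$).

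For the second equality, the idea is to apply Brown's isomorphism (Theorem \ref{thm:Brown}) to the pair of chain complexes $C$ and $G$ with the specific degree $p=0$. Brown's theorem provides a group isomorphism
\begin{equation*}
\prod_n H^n(C,H_n(G))\;\xrightarrow{\;\cong\;}\;H^0(C,G),
\end{equation*}
and since both sides are finite abelian groups by the finiteness assumptions of Remark \ref{rem:finiteHS}, passing to orders yields $|H^0(C,G)|=\prod_n |H^n(C,H_n(G))|$. Chaining this with the first equality gives the stated formula for the GSD.

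The main subtlety, rather than an obstacle, is to be clear that the displayed $\cong$ in the statement is at the level of groups (coming from Brown), whereas the equality with $\text{GSD}$ is at the level of integers (coming from counting a basis); one should therefore phrase the proof as "first extract a basis, then count, then match the count to an independently computed order via Brown." A secondary point worth noting explicitly is that the product over $n$ is finite: since $K_n$ is non-empty only for finitely many $n$ and each $G_n$ is finite, only finitely many of the factors $H^n(C,H_n(G))$ are non-trivial, so the product is well-defined. With these remarks in place, the proof reduces to a two-line citation of Proposition \ref{prop:GS2} and Theorem \ref{thm:Brown} followed by taking cardinalities.
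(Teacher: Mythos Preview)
Your proposal is correct and matches the paper's own proof almost exactly: the paper simply states that the result follows immediately from Propositions \ref{prop:GS1} and \ref{prop:GS2} together with Brown's Theorem (Theorem \ref{thm:Brown}). Your added remarks about finiteness from Remark \ref{rem:finiteHS} and the finiteness of the product over $n$ are welcome clarifications that the paper leaves implicit.
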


\begin{proof}
The proof follows immediately from Propositions \ref{prop:GS1} and \ref{prop:GS2}, as well as Brown's Theorem (See appendix \ref{ap:Brown}).   
\end{proof}

In physical terms, Theorem \ref{thm:main} underscores a very useful way to understand the GSD; this is, there is a contribution to it from each individual cohomology $H^n(C,H_n(G))$. Moreover, intricate relations between \textit{geometrical quantities} (related to the $\left( C_{\bullet} , \partial^C_{\bullet} \right)$ complex) and \textit{gauge quantities} (related to the $\left( G_{\bullet} , \partial^G_{\bullet} \right)$ complex) can be present. Let us exemplify the latter by restricting the geometrical chain as coming from a closed triangulable manifold. This allows us to use the universal coefficient theorem (See \cite{Hatcher} for a general reference), such that we can decompose each contribution as:
\begin{align}
& \label{UCT} H^n \left( C , H_n \left( G \right) \right) = \\ 
\nonumber & = \text{Hom} \left( H_n \left( C \right) ,  H_n \left( G \right) \right) \oplus \text{Ext}^1 \left( H_{n-1} \left( C \right) ,  H_n \left( G \right) \right) \;,
\end{align}
for all $0 \leq n \leq \dim K$. Here is explicit that, if the $\left( G_{\bullet} , \partial^G_{\bullet} \right)$ complex remains unchanged, any two homological triangulable manifolds ($X_1 \cong X_2 \Rightarrow C(X_1) = C(X_2)$) will have the same GSD. On the other hand, the appearance of two different homologies ($H_n \left( C \right)$ and $H_{n-1} \left( C \right)$) in the decomposition (\ref{UCT}) makes the physical interpretation of the terms somewhat cumbersome since it calls for a case by case study.

The theorem also displays its generalization potential, since it is clear that we can move away of strictly geometrical manifolds for the $\left( C_{\bullet} , \partial^C_{\bullet} \right)$ complex. In fact, \textit{mutatis mutandi}, any two objects $\mathfrak{A}$ and $\mathfrak{B}$ in which homology can be defined such that $H_{\bullet} \left( \mathfrak{A} \right) = H_{\bullet} \left( \mathfrak{B} \right)$, will have the same GSD as long as they have the same $\left( G_{\bullet}, \partial^G_{\bullet} \right)$ complex.

\subsection{Characterization of the Ground States}

Theorem \ref{thm:main} gives an efficient way to calculate the ground state degeneracy of a variety of topological models. Nonetheless, Theorem \ref{thm:main} does not give an intuition on the physical properties of the actual states. In this section we intend to remedy this by identifying the operators that measure and distinguish between different ground states, thus enabling us to define suitable \textit{quantum numbers} for the ground states and a framework for their study.

Given the characterization of the ground states discussed in the previous subsection:
\begin{defn}[Measurement operators]\label{def:measOP}
We understand by measurement operators a set of operators $\mathcal{O}:\mathcal{H}\rightarrow \mathcal{H}$ such that $\mathcal{O}\ket{f_G}=\mathcal{O}([f])\ket{f_G}$ where $[f]\in H^0(C,G)$ and $\ket{f_G}=P_{[f]}\ket{0}\in\mathcal{H}_0$ is the corresponding ground state. The different eigenvalues $\mathcal{O}([f])$ are used to distinguish ground states so that two such operators $\mathcal{O},\mathcal{O}'$ are considered equivalent if $\mathcal{O}([f])=\mathcal{O}'([f])$ for all $[f]\in H^0(C,G)$.  
\end{defn}

It should be obvious that a natural choice for such operators in this theory correspond to the clock operators of Definition \ref{def:PQoperators}. However, not any arbitrary clock operator will be an operator that distinguishes between ground states. In fact, we find that the following proposition must hold:

\begin{prop}\label{prop:measureops}
The operator $Q_m:\mathcal{H}\rightarrow\mathcal{H}$ distinguishes between ground states if and only if $m \in \text{ker} (\delta_0 )$.
\end{prop}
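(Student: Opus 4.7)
The plan is to compute $Q_m\ket{f_G}$ explicitly via the commutation relation of Lemma \ref{lemma:algOPPQ} and then read off the necessary and sufficient condition on $m$ by invoking the orthogonality of the generalized projectors $\mathcal{A}_s$ from Lemma \ref{lemma:AlgABOb}. Choosing $s=-\delta_0 m$ in the identity $\mathcal{A}_s Q_m=Q_m\mathcal{A}_{s+\delta_0 m}$ yields $Q_m\mathcal{A}_0=\mathcal{A}_{-\delta_0 m}Q_m$. Applying both sides to a basis vector $\ket{f}$ with $f\in\ker(\delta^0)$ and using $Q_m\ket{f}=\chi_m(f)\ket{f}$ produces the key identity
\begin{equation*}
Q_m\ket{f_G} \;=\; \chi_m(f)\,\mathcal{A}_{-\delta_0 m}\ket{f}.
\end{equation*}

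For the only-if direction, suppose $Q_m$ distinguishes ground states in the sense of Definition \ref{def:measOP}. Then $Q_m\ket{f_G}$ must be a scalar multiple of $\ket{f_G}=\mathcal{A}_0\ket{f}$. Unitarity of $Q_m$ together with the identity above forces $\|\mathcal{A}_{-\delta_0 m}\ket{f}\|=\|\mathcal{A}_0\ket{f}\|$, a quantity that is nonzero by Proposition \ref{prop:GS1}. But the orthogonality of the $\mathcal{A}_s$'s (Lemma \ref{lemma:AlgABOb}(ii)) makes $\text{Im}(\mathcal{A}_{-\delta_0 m})$ orthogonal to $\text{Im}(\mathcal{A}_0)$ whenever $\delta_0 m\neq 0$, which is incompatible with proportionality. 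Hence $\delta_0 m=0$ is necessary.

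For the if direction, when $\delta_0 m=0$ the key identity collapses to $Q_m\ket{f_G}=\chi_m(f)\ket{f_G}$, so $Q_m$ acts diagonally on the ground-state basis $\{\ket{f_G}\}_{[f]\in H^0(C,G)}$ of Proposition \ref{prop:GS2}. The eigenvalue depends only on the class $[f]$ because the state $\ket{f_G}$ itself does, by Remark \ref{rem:A0B0}; equivalently, by the duality identifying $\delta_0$ as the transpose of $\delta^{-1}$ one has $\chi_m(\delta^{-1}t)=\chi_{\delta_0 m}(t)=1$, so $\chi_m(f+\delta^{-1}t)=\chi_m(f)$ for every $t\in\text{hom}(C,G)^{-1}$. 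This confirms $Q_m$ fits Definition \ref{def:measOP} with eigenvalue $\chi_m([f])$.

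The main obstacle is not computational but conceptual: the correct bookkeeping of the character pairing that identifies the dual $\delta_0$ with the transpose of $\delta^{-1}$. This identification is precisely what converts the algebraic cocycle condition $\delta_0 m=0$ into the geometric statement that $\chi_m$ descends to a function on $H^0(C,G)$. Once that duality is internalized, the remaining steps reduce to routine manipulations with the clock/shift algebra and the orthogonal family $\{\mathcal{A}_s\}_s$.
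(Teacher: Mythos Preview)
Your proof is correct and coincides with the ``more direct'' alternative the paper itself sketches at the end of its proof: commute $Q_m$ past $\mathcal{A}_0$ via Lemma~\ref{lemma:algOPPQ}(ii) to obtain $Q_m\ket{f_G}=\chi_m(f)\,\mathcal{A}_{-\delta_0 m}\ket{f}$, then invoke orthogonality of the $\mathcal{A}_s$ to force $\delta_0 m=0$. The paper's primary argument is instead a direct expansion of $\ket{f_G}$ as a sum over $t\in\text{hom}(C,G)^{-1}$, reading off the condition $\chi_m(\delta^{-1}t)=1$ for all $t$; your treatment of the ``if'' direction (verifying the eigenvalue depends only on $[f]$) is actually more explicit than the paper's.
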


\begin{proof}
Consider some $f\in ker(\delta^0)$ and notice that $Q_m\ket{f_G}$ can be written as: 
\begin{align*}
Q_m\ket{f_G}&=\dfrac{1}{|\text{hom}(C,G)^{-1}|}\sum_t Q_m \ket{f+\delta^{-1}t}\\
&=\dfrac{\chi_m(f)}{|\text{hom}(C,G)^{-1}|}\sum_t \chi_{m}(\delta^{-1}t) \ket{f+\delta^{-1}t}\end{align*}
but $Q_m\ket{f_G}$ needs to be proportional to $\ket{f_G}$ and this is true only if $\chi_m(\delta^{-1}t)=1$ for all $t \in \text{hom} (C,G)^{-1}$. Therefore, $m$ must be trivial on $\text{im}(\delta_{-1})$ which is equivalent to $m\in \text{ker}(\delta_0)$. A more direct way to prove this is to observe that $Q_m \mathcal{A}_0=\mathcal{A}_{\delta_0 m} Q_m$ and $\sum_s \mathcal{A}_s=\mathbb{1}$ so $Q_m\ket{f_G}=\chi_m(f)\mathcal{A}_{\delta_0 m}\ket{f}$ can only be proportional to $\ket{f_G}=\mathcal{A}_0\ket{f}$ if $\delta_0 m=0$. 
\end{proof}

Moreover, if $m,m'\in\text{ker}(\delta_0)$ satisfy $m|_{\text{ker}(\delta_0)}=m'|_{\text{ker}(\delta_0)}$ then they define the same measurement operator since: 
\begin{align*}
\chi_{m'}(f)=\chi_{m'|_{\text{ker}(\delta_0)}}(f)=\chi_{m|_{\text{ker}(\delta_0)}}(f)=\chi_{m}(f),
\end{align*}
where we have used the fact that $f\in\text{ker}(\delta^0)$ for $[f]\in H^0(C,G)$. Thus, the measurement operators are defined up to an equivalence class  $[m] \in \text{ker}(\delta_0) /\sim $ defined by the relation $m' \sim  m$ if, and only if, $m'|_{\text{ker}(\delta_0)}=m|_{\text{ker}(\delta_0)}$. Therefore there is a one-to-one correspondence between ground state measurement operators $Q_m$ and elements of $\text{ker} (\delta_0) / \sim$.

Each class $[m]\in \text{ker}(\delta_0)/\sim$ induces a representation $\bar{m}\in \widehat{H}^0(C,G)$ defined by $\bar{m}([f])=m(f)$. To see that this is well defined notice that if $m'\sim m$ then $m'(f)=m(f)$ since $f\in \text{ker}(\delta^0)$ and if $f'\sim f$ then $m(f')=m(f)$ because $f'-f\in \text{im}(\delta^{-1})$. This procedure defines an isomorphism being straightforward to check that the inverse is obtained by mapping $r\in \widehat{H}^0(C,G)$ into the equivalence class $[\tilde{r}]\in \text{ker}(\delta_0)/\sim$ defined by $\tilde{r}(f)=r([f])$. More precisely:

\begin{prop}\label{prop:measureopsmain}
The equivalence classes of operators that distinguish between ground states, $Q_m$, are in one-to-one correspondence with the elements of $\widehat{H}^0(C,G)$.
\end{prop}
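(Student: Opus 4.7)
The plan is to exhibit an explicit group isomorphism $\Phi\colon \ker(\delta_0)/{\sim} \longrightarrow \widehat{H^0(C,G)}$ and verify it is well-defined, bijective, and compatible with the group law. The candidate is the map already suggested by the surrounding text: $\Phi([m])=\bar m$, where $\bar m([f]):=\chi_m(f)$.

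The first step is well-definedness on both coordinates. Fixing $m\in\ker(\delta_0)$, the condition $\delta_0 m=0$ means (by the defining formula for $\delta_0$ and the pairing identity from the proof of Proposition \ref{prop:measureops}) that $\chi_m\bigl(\delta^{-1}t\bigr)=1$ for every $t\in\text{hom}(C,G)^{-1}$. Consequently, for any $f,f'\in\ker(\delta^0)$ with $f'-f\in\text{im}(\delta^{-1})$ one has $\chi_m(f')=\chi_m(f)$, so $\bar m$ descends to a character of $H^0(C,G)=\ker(\delta^0)/\text{im}(\delta^{-1})$. Independence of the representative of the class $[m]$ is precisely the equivalence relation $\sim$: if $m|_{\ker(\delta_0)}\bigr.=m'|_{\ker(\delta_0)}\bigr.$ then $\chi_m(f)=\chi_{m'}(f)$ for all $f\in\ker(\delta^0)$, so $\bar m=\bar{m'}$. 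The homomorphism property is immediate from $\chi_{m+m'}=\chi_m\chi_{m'}$.

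Injectivity is essentially the definition of $\sim$: if $\bar m=\bar{m'}$ then $\chi_m$ and $\chi_{m'}$ agree on $\ker(\delta^0)$, hence $m\sim m'$. For surjectivity I will construct an inverse. Given $r\in\widehat{H^0(C,G)}$, first define $\tilde r\in\widehat{\ker(\delta^0)}$ by $\tilde r(f):=r([f])$; this is a genuine character of the subgroup $\ker(\delta^0)\subseteq \text{hom}(C,G)^0$ that vanishes on $\text{im}(\delta^{-1})$. Invoking the finiteness hypothesis of Remark \ref{rem:finiteHS}, the group $\text{hom}(C,G)^0$ is a finite abelian group, and Pontryagin duality for such groups guarantees that $\tilde r$ extends to a character of the full group, i.e.\ to some $m\in\text{hom}(C,G)_0$ with $m|_{\ker(\delta^0)}=\tilde r$. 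Since $m$ vanishes on $\text{im}(\delta^{-1})\subseteq\ker(\delta^0)$, the same pairing identity used above yields $\delta_0 m=0$, so $m\in\ker(\delta_0)$. By construction $\Phi([m])=r$, and any two extensions differ on $\ker(\delta^0)$ by the zero character, placing them in the same $\sim$-class.

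The only nontrivial ingredient in this plan is the extension step in surjectivity, which hinges on the finiteness assumption to invoke Pontryagin self-duality of finite abelian groups (equivalently, divisibility/injectivity of $U(1)$ as a $\mathbb{Z}$-module). Everything else is an unwinding of the definitions of $\sim$, $\delta_0$, and $H^0(C,G)$ together with standard orthogonality of characters, most of which has already been carried out in Proposition \ref{prop:measureops} and the paragraph preceding the statement.
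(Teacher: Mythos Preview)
Your proof is correct and follows exactly the paper's approach (laid out in the paragraph immediately preceding the proposition rather than in a separate proof environment): define $\Phi([m])=\bar m$ via $\bar m([f])=\chi_m(f)$, verify well-definedness on both coordinates, and invert via $r\mapsto[\tilde r]$ with $\tilde r(f)=r([f])$. You are more careful than the paper on one point in the surjectivity step: the paper treats $\tilde r$ as if it were already an element of $\text{hom}(C,G)_0$, whereas you correctly observe that $\tilde r$ is a priori only a character on the subgroup $\ker(\delta^0)$ and must be extended to all of $\text{hom}(C,G)^0$, invoking the finiteness hypothesis of Remark~\ref{rem:finiteHS} (equivalently, divisibility of $U(1)$) to guarantee such an extension exists.
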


This shouldn't come as a surprise. After all, Theorem \ref{thm:main} showed that $H^0(C,G)$ generates the different ground states so it should be expected that the representations of this groups can be used as observables for the ground states. Now, theorem \ref{thm:main} gives the quantum numbers that characterize the ground state subspace $\mathcal{H}_0$. However, the operators in \ref{prop:measureopsmain} relate the measurement of ground states to the representations of the $H^0(C,G)$ group. Then, there is natural way of constructing measuring operators that are labeled by the actual quantum numbers of $H^0(C,G)$:

\begin{prop}[Ground State Measurement Operators]\label{def:measureops}
Let $[f]\in H^0(C,G)$ and $[m]\in \widehat{H}^0(C,G)$. Then:
\begin{align*}
Q^{[f]}=\dfrac{1}{|\widehat{H}^0(C,G)|}\sum_{[m]}\chi_{[m]}([f])Q_{[m]}.
\end{align*}
are the ground state measurement operators of these models.
\end{prop}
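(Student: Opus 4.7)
The plan is to verify that the operators $Q^{[f]}$ act diagonally on the ground state basis $\{\ket{g_G}\}_{[g]\in H^0(C,G)}$, with eigenvalues that depend only on the label $[f]$ in a way that distinguishes one and only one ground state. This establishes that $Q^{[f]}$ fits Definition \ref{def:measOP} of a measurement operator and is naturally indexed by the cohomology class $[f]\in H^0(C,G)$, rather than by a dual class $[m]\in\widehat{H}^0(C,G)$ as in Proposition \ref{prop:measureopsmain}.

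First, I would invoke Proposition \ref{prop:measureopsmain} to record the action of the building blocks: for every $[m]\in\widehat{H}^0(C,G)$ and every basis ground state $\ket{g_G}$ with $[g]\in H^0(C,G)$, one has $Q_{[m]}\ket{g_G}=\chi_{[m]}([g])\ket{g_G}$. This is precisely the statement that under the identification $\widehat{H}^0(C,G)\cong \text{ker}(\delta_0)/\sim$, the class $[m]$ acts on the ground state labeled by $[g]$ as the character evaluation $\chi_{[m]}([g])$. Plugging this into the definition of $Q^{[f]}$ yields
\begin{align*}
Q^{[f]}\ket{g_G} &= \frac{1}{|\widehat{H}^0(C,G)|}\sum_{[m]}\chi_{[m]}([f])\,\chi_{[m]}([g])\,\ket{g_G} \\
 &= \frac{1}{|\widehat{H}^0(C,G)|}\left(\sum_{[m]}\chi_{[m]}([f]+[g])\right)\ket{g_G},
\end{align*}
where I used the multiplicative property of characters, $\chi_{[m]}([f])\chi_{[m]}([g])=\chi_{[m]}([f]+[g])$, since $H^0(C,G)$ is a finite abelian group.

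Next, I would apply character orthogonality on the finite abelian group $H^0(C,G)$: summing a nontrivial character over the whole group gives zero, while summing the trivial character gives the order of the group. Thus
\begin{align*}
\sum_{[m]}\chi_{[m]}([f]+[g])=|\widehat{H}^0(C,G)|\,\delta\!\left([f]+[g],0\right),
\end{align*}
and therefore $Q^{[f]}\ket{g_G}=\delta([g],-[f])\ket{g_G}$. This shows that $Q^{[f]}$ is diagonal in the ground state basis with eigenvalues $0$ or $1$, and that it projects onto the single ground state whose label is the (additive) inverse of $[f]$. Consequently the $Q^{[f]}$ form a complete family of mutually orthogonal rank-one projectors on $\mathcal H_0$ indexed by $[f]\in H^0(C,G)$, and by Definition \ref{def:measOP} they are measurement operators which distinguish all ground states. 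Equivalence in the sense of Definition \ref{def:measOP} then corresponds exactly to equality of labels in $H^0(C,G)$, giving the promised bijection between $\{Q^{[f]}\}$ and $H^0(C,G)$.

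The computation itself is essentially a finite Fourier transform, so the only delicate point is the bookkeeping of conventions: one must check that $Q_{[m]}$ is well defined on $\widehat{H}^0(C,G)$ (this is Proposition \ref{prop:measureopsmain}, where it was shown that the equivalence class $[m]\in\text{ker}(\delta_0)/\!\sim$ induces a well-defined character on $H^0(C,G)$), and that the characters used in the definition of $Q^{[f]}$ are precisely those induced characters, so the orthogonality relation applies on the nose. I expect the main subtlety to be confirming that the sign convention appearing in the Kronecker delta (a projector onto $\ket{(-f)_G}$ rather than $\ket{f_G}$) is simply a labeling choice and does not affect the statement, since $[f]\mapsto -[f]$ is a bijection of $H^0(C,G)$ onto itself; with this remark the $Q^{[f]}$ indeed serve as the natural ground-state measurement operators indexed by the quantum numbers of $H^0(C,G)$.
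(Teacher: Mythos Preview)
Your proof is correct and follows the same route as the paper: apply $Q_{[m]}$ to a basis ground state $\ket{g_G}$ to get the character value, then sum over $[m]$ and invoke character orthogonality on the finite abelian group $H^0(C,G)$. The paper writes the outcome as $\delta([f],[g])$ whereas your more careful bookkeeping yields $\delta([g],-[f])$; your remark that this is merely a labeling convention (since $[f]\mapsto -[f]$ is a bijection) is exactly the right way to reconcile the two, and the paper is simply tacitly using the conjugate character in its orthogonality step.
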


\begin{proof}
That the above is a ground state measurement operator can be seen from its action on a ground state $\ket{g_G}\in \mathcal{H}_0$, $[g]\in H^0(C,G)$:
\begin{align*}
Q^{[f]}\ket{g_G}=&\dfrac{1}{|\widehat{H}^0(C,G)|}\sum_{[m]}\chi_{[m]}([f])Q_{[m]}\ket{g_G} \\
=&\dfrac{1}{|\widehat{H}^0(C,G)|}\sum_{[m]}\chi_{[m]}([f])\chi_{[m]}([g]) \ket{g_G} \\
=&\delta([f],[g]) \ket{g_G},
\end{align*}
where in the last line the orthogonality relation of characters was used. This last expression must be contrasted with Definition \ref{def:measOP}, which completes the proof.
\end{proof}

\subsection{Geometrical Interpretation}

Theorem \ref{thm:main} and  Proposition \ref{prop:measureopsmain} suggest that elements of $H^0(C,G)$ can provide suitable quantum numbers to describe the ground state subspace of the model. However, it is unclear what the geometrical interpretation behind some $[f]\in H^0(C,G)$ is, since the operator $P_{[f]}$ performs many operations in different dimensions simultaneously. In order to explore the ground state basis in a more physical way we will make use of the isomorphic quality between the spaces $\prod_n H^n ( C, H_{n-p}(G))$ and $H^p(C , G)$. Concretely, we can use an isomorphism $\alpha^p : \prod_n H^n ( C, H_{n-p}(G)) \rightarrow H^p(C , G)$, as defined in appendix \ref{ap:Brown} (or equivalent), to decompose any $[f]\in H^0(C,G)$ as $[f]=\alpha^p(\sum_n [f_n])$ where $[f_n]\in H^n(C,H_n(G))$, such that:
\begin{align*}
P_{\left[ f \right]} = P_{\sum_n \alpha^p\left[ f_n \right]} = \prod_n P_{\alpha^p\left[f_n \right]}.
\end{align*}

As by the definition of $\alpha^p$, it is clear that the operator $P_{\alpha^p(f_n)}$ only acts non trivially on the $n$-simplexes that intersect $f_n$, exclusively. Therefore, operators such as $P_{\alpha^p(f_n)}$ are the higher dimensional analogous to the string operators used to construct and label the ground states as in the case of the Toric Code \cite{Kitaev1,Kitaev2,Kitaev3,Kitaev4}.

Moreover, due to the fact that $\widehat{H}^0(C;G)=\prod_n \widehat{H}^n(C,H_n(G))$, any measurement operator can be decomposed in a similar fashion as:
\begin{align*}
Q_m = \sum_n Q_{\widehat{\alpha^{-1}} \left( m_n \right)}
\end{align*}
where $m_n\in \widehat{H}^n(C,H_n(G))$. Hence, by direct calculation it follows that $\chi_{m}([f]) = \prod_n \chi_{\alpha^{-1}(m_n)}(\alpha^p[f_n])=\prod_n \chi_{m_n}([f_n])$, which, in turn implies that:
\begin{align*}
Q_m \ket{f_G} = Q_m P_{[f]}\ket{0} = \prod_n \chi_{m_n}([f_n]) \ket{f_G}.
\end{align*}

Therefore, standard cohomological groups provide a source of quantum numbers since any ground state is of the form: 
\begin{align*}
\ket{[f_0]\dots[f_n]} \simeq \prod_n P_{\alpha^p[f_n]}\ket{0}
\end{align*}
and can be specified by observables $\mathcal{O}_{r_n}=Q_{\alpha^p(r_n)}$ obtained from representations $r_n \in H^n(C, H_n(G))$.

\section{\label{sec:Examples} Examples}

Although the construction presented up to here has the potential to define and develop new models with topological order from the general framework, this section is intended to show how it works as an encompassing formalism for a large class of models already found in the literature.

\begin{exmp}[Abelian $1$-gauge: Quantum Double Models]\label{exmpl:AQDM}
Let $X$ be a compact $2$-manifold. We take the complex $\left( C_{\bullet}, \partial^C_{\bullet} \right)$ to be $C_{\bullet} = C_{\bullet} (X)$ with $\partial^C_{\bullet}$ the usual boundary operator. The geometric chain $\left( C_{\bullet}, \partial^C_{\bullet} \right)$ then reads:
        \begin{equation*}
0 \xrightarrow{\partial_{3}} C_{2} \xrightarrow{\partial^C_2}  C_{1}  \xrightarrow{\partial^C_1} C_0 \xrightarrow{\partial^C_0} 0 \; \;. 
 \end{equation*}
We also restrict $\left( G_{\bullet}, \partial^G_{\bullet} \right)$ to be: 
\begin{align*}
\xrightarrow{} 0 \xrightarrow{\partial^G_2} G_{1} \xrightarrow{\partial^G_1} 0 \xrightarrow{} \;\; , 
 \end{align*}
i.e. having only a $1$-gauge. Without loss of generality, taking $C_{\bullet}(X)$ to be square lattice $\mathcal{L}$ embedded in $X$ we then have:
		\begin{eqnarray*}
			\mathcal{H} _{\mathrm{QDM}} = \bigotimes _{x \in K_1} \mathbb{C} \left[ G_{1} \right]_x = \bigotimes _{l \in C(X)} \mathbb{C} \left[ G_{1} \right]_l,
         \end{eqnarray*}
where the last equality follows from recognizing that each $x \in K_1$ is now associated with only one link $l$ of $C(X)$, as expected.

\begin{figure}[!h]
 			\begin{adjustbox}{max size={0.95 \textwidth}{0.95 \textheight}}
            \centering
					\begin{tikzpicture}[
					scale=0.5,
					equation/.style={thin},
					trans/.style={thin,shorten >=0.5pt,shorten <=0.5pt,>=stealth},
					flecha/.style={thin,->,shorten >=0.5pt,shorten <=0.5pt,>=stealth},
					every transition/.style={thick,draw=black!75,fill=black!20}
					]
						\draw[color=blue!20,fill=blue!20] (6,5) rectangle (8,7);
	    				\draw[color=red!20,fill=red!20] (1,2) rectangle (3,4);
						\draw[->, color=gray, ultra thick, >=stealth] (0,0) -- (0,2.2);
						\draw[->, color=gray, ultra thick, >=stealth] (0,2) -- (0,4.2);
						\draw[->, color=gray, ultra thick, >=stealth] (0,4) -- (0,6.2);
						\draw[-, color=gray, ultra thick] (0,6) -- (0,8);
						\draw[->, color=gray, ultra thick, >=stealth] (2,0) -- (2,2.2);
						\draw[->, color=gray, ultra thick, >=stealth] (2,2) -- (2,4.2);
						\draw[->, color=gray, ultra thick, >=stealth] (2,4) -- (2,6.2);
						\draw[-, color=gray, ultra thick] (2,6) -- (2,8);
						\draw[->, color=gray, ultra thick, >=stealth] (4,0) -- (4,2.2);
						\draw[->, color=gray, ultra thick, >=stealth] (4,2) -- (4,4.2);
						\draw[->, color=gray, ultra thick, >=stealth] (4,4) -- (4,6.2);
						\draw[-, color=gray, ultra thick] (4,6) -- (4,8);
						\draw[->, color=gray, ultra thick, >=stealth] (6,0) -- (6,2.2);
						\draw[->, color=gray, ultra thick, >=stealth] (6,2) -- (6,4.2);
						\draw[->, color=gray, ultra thick, >=stealth] (6,4) -- (6,6.2);
						\draw[-, color=gray, ultra thick] (6,6) -- (6,8);
						\draw[->, color=gray, ultra thick, >=stealth] (8,0) -- (8,2.2);
						\draw[->, color=gray, ultra thick, >=stealth] (8,2) -- (8,4.2);
						\draw[->, color=gray, ultra thick, >=stealth] (8,4) -- (8,6.2);
						\draw[-, color=gray, ultra thick] (8,6) -- (8,8);
						\draw[->, color=gray, ultra thick, >=stealth] (-1,1) -- (1.2,1);
						\draw[->, color=gray, ultra thick, >=stealth] (1,1) -- (3.2,1);
						\draw[->, color=gray, ultra thick, >=stealth] (3,1) -- (5.2,1);
						\draw[->, color=gray, ultra thick, >=stealth] (5,1) -- (7.2,1);
						\draw[->, color=gray, ultra thick, >=stealth] (7,1) -- (9.2,1);
						\draw[->, color=gray, ultra thick, >=stealth] (-1,3) -- (1.2,3);
						\draw[->, color=gray, ultra thick, >=stealth] (1,3) -- (3.2,3);
						\draw[->, color=gray, ultra thick, >=stealth] (3,3) -- (5.2,3);
						\draw[->, color=gray, ultra thick, >=stealth] (5,3) -- (7.2,3);
						\draw[->, color=gray, ultra thick, >=stealth] (7,3) -- (9.2,3);
						\draw[->, color=gray, ultra thick, >=stealth] (-1,5) -- (1.2,5);
						\draw[->, color=gray, ultra thick, >=stealth] (1,5) -- (3.2,5);
						\draw[->, color=gray, ultra thick, >=stealth] (3,5) -- (5.2,5);
						\draw[->, color=gray, ultra thick, >=stealth] (5,5) -- (7.2,5);
						\draw[->, color=gray, ultra thick, >=stealth] (7,5) -- (9.2,5);
						\draw[->, color=gray, ultra thick, >=stealth] (-1,7) -- (1.2,7);
						\draw[->, color=gray, ultra thick, >=stealth] (1,7) -- (3.2,7);
						\draw[->, color=gray, ultra thick, >=stealth] (3,7) -- (5.2,7);
						\draw[->, color=gray, ultra thick, >=stealth] (5,7) -- (7.2,7);
						\draw[->, color=gray, ultra thick, >=stealth] (7,7) -- (9.2,7);
						\draw[->, ultra thick, >=stealth] (2,1) -- (2,2.2);
						\draw[->, ultra thick, >=stealth] (2,2.0) -- (2,4.2);
						\draw[-, ultra thick] (2,4.0) -- (2,5.0);
						\draw[->, ultra thick, >=stealth] (0.0,3) -- (1.2,3);
						\draw[->, ultra thick, >=stealth] (1.0,3) -- (3.2,3);
						\draw[-, ultra thick] (3.0,3) -- (4.0,3);
						\draw[->, ultra thick, >=stealth] (6,5.0) -- (6,6.2);
						\draw[-, ultra thick] (6,6.0) -- (6,7.0);
						\draw[->, ultra thick, >=stealth] (8,5.0) -- (8,6.2);
						\draw[-, ultra thick] (8,6.0) -- (8,7.0);
						\draw[->, ultra thick, >=stealth] (6.0,5) -- (7.2,5);
						\draw[-, ultra thick] (7.0,5) -- (8.0,5);
						\draw[->, ultra thick, >=stealth] (6.0,7) -- (7.2,7);
						\draw[-, ultra thick] (7.0,7) -- (8.0,7);
					\end{tikzpicture}
                    \end{adjustbox}
			\caption{\label{cellsQMD} Planar section of a oriented squared lattice $C_{\bullet} (X)$ giving support to the abelian QDM, with the pink and cyan sectors related to the local gauge $A_{gx^*}$ and local holonomy $B_{ry_*}$ operators, respectively.}
                    \end{figure}
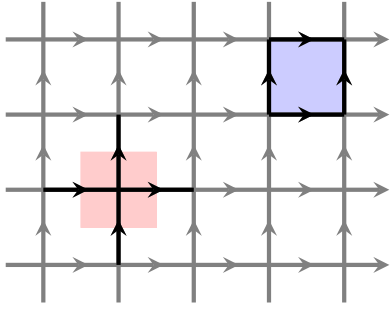
                    
                    \begin{figure}[!h]
 			\begin{adjustbox}{max size={0.95 \textwidth}{0.95 \textheight}}
            \centering
 					\begin{tikzpicture}[scale=0.3,equation/.style={thin},
					trans/.style={thin,shorten >=0.5pt,shorten <=0.5pt,>=stealth},
					flecha/.style={thin,->,shorten >=0.5pt,shorten <=0.5pt,>=stealth},
					every transition/.style={thick,draw=black!75,fill=black!20}]
					\draw[equation] (-9.8,0.0) -- (-9.8,0.0) node[midway,right] {$ A_{gx^*} $};
					\draw[trans] (-6.8,2.3) -- (-6.8,-2.3) node[above=2pt,right=-1pt] {};
					\draw[trans] (-0.1,2.3) -- (1.1,-0.06) node[above=2pt,right=-1pt] {};
					\draw[trans] (-0.1,-2.3) -- (1.1,0.06) node[above=2pt,right=-1pt] {};
					\draw[flecha] (-5.1,0.0) -- (-3.9,0.0) node[above=2pt,right=-1pt] {};
					\draw[flecha] (-4.1,0.0) -- (-1.6,0.0) node[above=2pt,right=-1pt] {};
					\draw[trans] (-1.8,0.0) -- (-0.9,0.0) node[above=2pt,right=-1pt] {};
					\draw[flecha] (-3.0,-1.4) -- (-3.0,-0.5) node[above=2pt,right=-1pt] {};
					\draw[flecha] (-3.0,-0.7) -- (-3.0,1.0) node[above=2pt,right=-1pt] {};
					\draw[trans] (-3.0,0.8) -- (-3.0,1.4) node[above=2pt,right=-1pt] {};
					\draw[equation] (-3.0,2.1) -- (-3.0,2.1) node[midway] {$ a $};
					\draw[equation] (0.6,0.0) -- (0.6,0.0) node[midway,left] {$ b $};
					\draw[equation] (-3.0,-2.1) -- (-3.0,-2.1) node[midway] {$ c $};
					\draw[equation] (-6.7,0.0) -- (-6.7,0.0) node[midway,right] {$ d $};
					\draw[equation] (2.5,-0.07) -- (2.5,-0.07) node[midway] {$ = $};
					\draw[trans] (4.3,2.3) -- (4.3,-2.3) node[above=2pt,right=-1pt] {};
					\draw[trans] (14.6,2.3) -- (15.8,-0.06) node[above=2pt,right=-1pt] {};
					\draw[trans] (14.6,-2.3) -- (15.8,0.06) node[above=2pt,right=-1pt] {};
					\draw[flecha] (7.9,0.0) -- (9.1,0.0) node[above=2pt,right=-1pt] {};
					\draw[flecha] (8.9,0.0) -- (11.4,0.0) node[above=2pt,right=-1pt] {};
					\draw[trans] (11.2,0.0) -- (12.1,0.0) node[above=2pt,right=-1pt] {};
					\draw[flecha] (10.0,-1.4) -- (10.0,-0.5) node[above=2pt,right=-1pt] {};
					\draw[flecha] (10.0,-0.7) -- (10.0,1.0) node[above=2pt,right=-1pt] {};
					\draw[trans] (10.0,0.8) -- (10.0,1.4) node[above=2pt,right=-1pt] {};
					\draw[equation] (10.1,2.0) -- (10.1,2.0) node[midway] {$ a+g $};
					\draw[equation] (15.2,-0.2) -- (15.2,-0.2) node[midway,left] {$ b+g $};
					\draw[equation] (10.1,-2.0) -- (10.1,-2.0) node[midway] {$ c-g  $};
					\draw[equation] (4.4,0.0) -- (4.4,0.0) node[midway,right] {$ d-g  $};
				\end{tikzpicture} 
                \end{adjustbox}
                \caption{\label{kuperberg-figure1} Definition of the local operator $ A_{gx^*}$ with relation to the support regions depicted in FIG. \ref{cellsQMD}. In this case $ x \in K_0$ (i.e. vertices) a regular square lattice $C_{\bullet}(X)$, with $a,b,c,d,g \in G_1$.}
                \end{figure}
                
                \begin{figure}[!h]
 			\begin{adjustbox}{max size={0.95 \textwidth}{0.95 \textheight}}
            \centering
				\begin{tikzpicture}[scale=0.3,equation/.style={thin},
					trans/.style={thin,shorten >=0.5pt,shorten <=0.5pt,>=stealth},
					flecha/.style={thin,->,shorten >=0.5pt,shorten <=0.5pt,>=stealth},
					every transition/.style={thick,draw=black!75,fill=black!20}
					]
					\draw[equation] (-8.8,0.0) -- (-8.8,0.0) node[midway,right] {$ B_{hy_*} $};
					\draw[trans] (-5.6,2.3) -- (-5.6,-2.3) node[above=2pt,right=-1pt] {};
					\draw[trans] (-0.1,2.3) -- (1.1,-0.06) node[above=2pt,right=-1pt] {};
					\draw[trans] (-0.1,-2.3) -- (1.1,0.06) node[above=2pt,right=-1pt] {};
					\draw[flecha] (-4.7,1.2) -- (-2.1,1.2) node[above=2pt,right=-1pt] {};
					\draw[trans] (-2.3,1.2) -- (-0.1,1.2) node[above=2pt,right=-1pt] {};
					\draw[flecha] (-4.7,-1.2) -- (-2.1,-1.2) node[above=2pt,right=-1pt] {};
					\draw[trans] (-2.3,-1.2) -- (-0.1,-1.2) node[above=2pt,right=-1pt] {};
					\draw[flecha] (-3.8,-2.0) -- (-3.8,0.3) node[above=2pt,right=-1pt] {};
					\draw[trans] (-3.8,0.0) -- (-3.8,2.0) node[above=2pt,right=-1pt] {};
					\draw[flecha] (-1.0,-2.0) -- (-1.0,0.3) node[above=2pt,right=-1pt] {};
					\draw[trans] (-1.0,0.0) -- (-1.0,2.0) node[above=2pt,right=-1pt] {};
					\draw[equation] (-2.3,2.0) -- (-2.3,2.0) node[midway] {$ a $};				
					\draw[equation] (0.7,0.0) -- (0.7,0.0) node[midway,left] {$ b $};
					\draw[equation] (-2.3,-2.0) -- (-2.3,-2.0) node[midway] {$ c $};
					\draw[equation] (-5.5,0.0) -- (-5.5,0.0) node[midway,right] {$ d $};
					\draw[equation] (6.7,-0.07) -- (6.7,-0.07) node[midway] {$ = \chi_h \left(-a + b +c -d \right) $};
					\draw[trans] (12.2,2.3) -- (12.2,-2.3) node[above=2pt,right=-1pt] {};
					\draw[trans] (17.7,2.3) -- (18.8,-0.06) node[above=2pt,right=-1pt] {};
					\draw[trans] (17.7,-2.3) -- (18.8,0.06) node[above=2pt,right=-1pt] {};
					\draw[flecha] (13.0,1.2) -- (15.6,1.2) node[above=2pt,right=-1pt] {};
					\draw[trans] (15.4,1.2) -- (17.6,1.2) node[above=2pt,right=-1pt] {};
					\draw[flecha] (13.0,-1.2) -- (15.6,-1.2) node[above=2pt,right=-1pt] {};
					\draw[trans] (15.4,-1.2) -- (17.6,-1.2) node[above=2pt,right=-1pt] {};
					\draw[flecha] (13.9,-2.0) -- (13.9,0.3) node[above=2pt,right=-1pt] {};
					\draw[trans] (13.9,0.0) -- (13.9,2.0) node[above=2pt,right=-1pt] {};
					\draw[flecha] (16.7,-2.0) -- (16.7,0.3) node[above=2pt,right=-1pt] {};
					\draw[trans] (16.7,0.0) -- (16.7,2.0) node[above=2pt,right=-1pt] {};
					\draw[equation] (15.4,2.0) -- (15.4,2.0) node[midway] {$ a $};
					\draw[equation] (18.4,0.0) -- (18.4,0.0) node[midway,left] {$ b $};
					\draw[equation] (15.4,-2.0) -- (15.4,-2.0) node[midway] {$ c $};
					\draw[equation] (12.2,0.0) -- (12.2,0.0) node[midway,right] {$ d $};
				\end{tikzpicture}
                \end{adjustbox}
			\caption{\label{kuperberg-figure2} Definition of the local operator $B_{hy_*}$ with relation to the support regions depicted in FIG. \ref{cellsQMD}. In this case $y \in K_2$ (i.e. plaquetes) of a regular square lattice $C_{\bullet}(X)$, with $a,b,c,d \in G_1$ and $h \in \hat{G}_1$.}
		\end{figure}

        From Definition \ref{def:AxBx}, the realization of the operators of the model over the square lattice $C_{\bullet}(X)$ involves the support regions of FIG. \ref{cellsQMD}, while the operators are defined as those appearing FIG. \ref{kuperberg-figure1}. For completeness, the last identification comes from Definition \ref{def:Hamiltonian}, which gives:
		\begin{eqnarray*}
H_{\mathrm{QDM}} = - \sum_{x\in K_0} A_x^{0} - \sum_{y\in K_2} B_y^{0} = - \sum _{v \in \mathcal{L}} A_{v } - \sum _{p \in \mathcal{L}} B_{p }.
		\end{eqnarray*}
where each $x \in K_0$ is associated to a vertex $v \in \mathcal{L}$ and each $y \in K_2$ is associated to a plaquette $p \in \mathcal{L}$. As it can be seen, the particularizations discussed generate models that coincide with those of the abelian version of the Quantum Double Models.
\end{exmp}

\begin{exmp}[GSD of the Toric code]
From the previous example, let us study the Toric Code, proposed by Kitaev \cite{Kitaev1,Kitaev2,Kitaev3}. For this, we take $X = T^2$ a Torus. Hence, $C_{\bullet} = C_{\bullet} (T^2)$ and $G_1 = \mathbb{Z}_2$. We then have the configurations $f \in \text{hom} \left( C , G \right)^0$ to be of the form shown in figure \ref{TCcomplex} with associated homology groups shown in figure \ref{TCHom}.

\begin{figure}[!h]
 			\begin{adjustbox}{max size={0.95 \textwidth}{0.95 \textheight}}
            \centering
 \begin{tikzpicture}
  \matrix (m) [matrix of math nodes,row sep=2em,column sep=3em,minimum width=2em]
  { 0 & C_{2} & C_{1} & C_{0} & 0 \\
    0  & 0 & \mathbb{Z}_2 & 0  & 0 \\};
  \path[-stealth]
    (m-1-1) edge node [above] {$ \partial^C_3 $} (m-1-2)
    (m-1-2)	edge node [above] {$ \partial^C_2 $} (m-1-3)
            edge node [left] {$ f_2 $} (m-2-2)
    (m-1-3) edge node [above] {$ \partial^C_1 $} (m-1-4)
    		edge node [left] {$f_{1}$} (m-2-3)
    (m-1-4) edge node [above] {$ \partial^C_0 $} (m-1-5)
    		edge node [left] {$ f_0 $} (m-2-4)
    (m-2-1) edge node [below] {$ \partial^G_3 $} (m-2-2)
    (m-2-2) edge node [below] {$ \partial^G_2 $} (m-2-3)
    (m-2-3) edge node [below] {$ \partial^G_1 $} (m-2-4)
    (m-2-4) edge node [below] {$ \partial^G_0 $} (m-2-5);
\end{tikzpicture}
\end{adjustbox}
\caption{\label{TCcomplex} A configuration $f \in \text{hom}\left( C, G \right)^0 $ for the toric code.}
\end{figure}

\begin{figure}[!h]
 			\begin{adjustbox}{max size={0.95 \textwidth}{0.95 \textheight}}
            \centering
\begin{tikzpicture}
\node at (0,0) {$H_n \left( T^2 \right) = \begin{cases}
\mathbb{Z}, \quad \quad n=0 ,\\
\mathbb{Z} \oplus \mathbb{Z}, \, n=1, \\
\mathbb{Z}, \quad \quad n=2.
\end{cases} ; \,\,
H_n (G) = \begin{cases}
0, \quad n=0 ,\\
\mathbb{Z}_2, \; n=1, \\
0, \quad n=2.
\end{cases}$};
\end{tikzpicture}
\end{adjustbox}
\caption{\label{TCHom}  Homology groups of the $\left( C_{\bullet} , \partial^C_{\bullet} \right)$ and $\left( G_{\bullet} , \partial^G_{\bullet} \right)$ complexes associated to the toric code, where $H_n \left( C \right) \cong H_n \left( T^2 \right)$.}
\end{figure}

We immediately use Theorem \ref{thm:main} to calculate the ground state degeneracy of this model:
\begin{align*}
\text{GSD} = |H^0(C,G)| = |H^1(C,H_1(G))| = 2^2,
\end{align*}
where we have used the universal coefficient theorem ( Eq. (\ref{UCT}) ) in order to expand the terms in $|H^0(C,G)|$. The result coincides with the expression for the ground state degeneracy found in the literature.
\end{exmp}

\begin{rem}
We stress the power of the result since for the previous calculation we basically only need the information contained in diagram of Figure \ref{TCcomplex}. 
\end{rem}

\begin{exmp}[GSD of the 3D Toric Code on $T^3$]
In the same vein as the previous example, we now consider the 3$D$ version of the Toric Code on a $3$-torus $T^3$  \cite{Keyserlingk13}. The geometrical complex is $C_{\bullet} = C_{\bullet} (T^3)$ and the $\left( G_{\bullet}, \partial^G_{\bullet} \right)$ complex consists on a single non-trivial group $G_1 = \mathbb{Z}_2$. The homology groups of $T^3$ are given by :
\begin{align*}
H_n (C) \cong H_n(T^3) & =
\begin{cases}
\mathbb{Z}, \quad \quad \quad \quad \; \; n=0,3 \\
\mathbb{Z} \oplus \mathbb{Z} \oplus \mathbb{Z}, \quad n=1,2
\end{cases}
\\
H_n (G) & = \begin{cases}
0, \quad n=0 ,\\
\mathbb{Z}_2, \; n=1, \\
0, \quad n=2.
\end{cases}.
\end{align*}
Using again Theorem \ref{thm:main}, the ground state degeneracy of this model is:
\begin{align*}
\text{GSD}=|H^0(C,G)|=|H^1(C,H_1(G))| \\ 
=|\text{Hom}(H_1(T^3),H_1(G))|=2^3,
\end{align*}
as calculated in \cite{Keyserlingk13}.
\end{exmp}

\begin{exmp}[Abelian $1,2$-gauge theories] \label{exmp:1-2gauge}
We warn the reader that we will sacrifice formality for the sake of keeping the length of the example, and hence this paper, short. We begin by paraphrasing the following results found in \cite{Baez2,Forrester}: Any $2$-group, or equivalently a \textit{crossed module}, completely defines a $2$-gauge theory on a smooth compact manifold $X$ (we refer to \cite{Baez1,Baez2} for definitions and explicit constructions on $2$-groups, since they are outside of the scope of this paper). We focus on the latter since they are easier to connect with our formalism.

We start by considering a compact $n$-manifold such that the complexes $( C_{\bullet} , \partial^C_{\bullet})$ with $C_{\bullet} = C_{\bullet} (X)$ and $\partial^C_{\bullet}$ the usual boundary map, $(G_{\bullet}, \partial^G_{\bullet})$ and a configuration $f \in \text{hom} (C,G)^0$ as depicted in FIG. \ref{examplehgt}. We have taken $G_i = 0$ for all $i \neq 1,2$ so we can readily see that $f_0$ and $f_k$, with $k\geq 3$, do not affect the region of the lower chain between $G_1$ and $G_2$.

\begin{figure}[h]
\centering
\begin{tikzpicture}
  \matrix (m) [matrix of math nodes,row sep=2em,column sep=3em,minimum width=2em]
  {   \cdots & C_{3} & C_{2} & C_{1} & C_{0} & 0  \\
      \cdots & 0 & G_{2} & G_{1} & 0 & 0 \\};
  \path[-stealth]
    (m-1-1) edge node [above] {$\partial^C_{4}$} (m-1-2) 
    (m-1-2) edge node [above] {$\partial^C_{3}$} (m-1-3)
            edge node [left] {$f_3$} (m-2-2)
    (m-1-3) edge node [above] {$\partial^C_{2}$} (m-1-4)
    		edge node [left] {$f_{2}$} (m-2-3)
    (m-1-4) edge node [above] {$\partial^C_{1}$} (m-1-5)
    (m-1-5) edge node [above] {$\partial^C_{0}$} (m-1-6)
    (m-1-5) edge node [right] {$f_0$} (m-2-5)
    (m-1-4) edge node [right] {$f_{1}$} (m-2-4)
    (m-2-1) edge node [below] {$\partial^G_{4}$} (m-2-2)
    (m-2-2) edge node [below] {$\partial^G_{3}$} (m-2-3)
    (m-2-3) edge node [below] {$\partial^G_{2} $} (m-2-4)
    (m-2-4) edge node [below] {$\partial^G_{1}$} (m-2-5)
    (m-2-5) edge node [below] {$\partial^G_{0}$} (m-2-6);
\end{tikzpicture}
\caption{\label{examplehgt}}
\end{figure}

We recognize $(G_{\bullet}, \partial^G_{\bullet})$ to be effectively a \textit{crossed module of groups}, which is defined as the quadruple $( G_2, G_1, \partial^G_2 , \triangleleft)$, where $\triangleleft$ is an action of $G_1$ on $G_2$ by automorphisms; such that (i) $\partial^G_2$ is $G_1$-equivariant: $\partial^G_2 \left( h \triangleleft g \right) = g^{-1} \partial^G_2 \left( h \right) g$ for all $g \in G_1 $ and $h \in G_2 $, and (ii) $\partial^G_2$ satisfies the Peiffer identity: $h' \triangleleft \partial^G_2 \left( h \right) =h^{-1} h' h$ in $G_2$ for all $h, h' \in G_2$. Notice that in our case, the action $\triangleleft$ is innocuous, since the previous requirements are trivialized due to $G_2$ and $G_1$ being abelian groups.

The latter identification relates the chain of FIG. \ref{examplehgt} with a $2$-gauge abelian theory since when a \textit{crossed module} is well defined, a $2$-gauge theory is also well defined via the equivalence through the corresponding $2$-group ( For details see \cite{Baez1,Baez2,Forrester} ). In particular, the latter can be directly related with the class of lattice realizations proposed by Kapustin in Section $4$ of \cite{Kapustin13}. Concretely, our model corresponds to the one obtained by the $2$-group $(G_1,G_2,\partial_2^G,\triangleleft)$ where the action $\triangleleft:G_1\rightarrow \text{Aut}(G_2)$ is trivial.
\end{exmp}

\begin{rem}
The last example shows how the formalism presented in this paper is a general framework for higher gauge theories in their abelian versions.
\end{rem}

\begin{exmp}[GSD of a $\mathbb{Z}_2$, $\mathbb{Z}_4$ Abelian 1,2-gauge theory over a sphere $S^2$] 
We proceed as before by considering a 1,2-gauge theory defined on a discretization of a 2-sphere $S^2$. This is, the $1$-gauge degrees of freedom are located at the $1$-simplices of $C_1 \in C(S^2)$ whereas the $2$-gauge ones live on the $2$-simplices of $C_2 \in C(S^2)$, which is clear from the configurations $f  \in \text{hom}(C,G)^0$. The $\left( G_{\bullet}, \partial^G_{\bullet} \right)$ needs to be defined in a way that the group homomorphisms $\partial^G_{\bullet}$ are compatible. One such possibility is taking $G_1 = \mathbb{Z}_2=\{1,-1\}$ and $G_2=\mathbb{Z}_4=\{1,i,-1,-i\}$, such that group homomorphism is $\partial_2^G(i)=-1$. The configuration $f \in \text{hom}(C,G)^0$ and the homology groups related to this structure are shown in Figs.\ref{12complex} and \ref{12Hom}, respectively.
\begin{figure}[!h]
 			\begin{adjustbox}{max size={0.95 \textwidth}{0.95 \textheight}}
            \centering
\begin{tikzpicture}
  \matrix (m) [matrix of math nodes,row sep=2em,column sep=3em,minimum width=2em]
  { 0 & C_{2} & C_{1} & C_{0} & 0 \\
    0  & \mathbb{Z}_4 & \mathbb{Z}_2 & 0  & 0 \\};
  \path[-stealth]
    (m-1-1) edge node [above] {$ \partial^C_3 $} (m-1-2)
    (m-1-2)	edge node [above] {$ \partial^C_2 $} (m-1-3)
            edge node [left] {$ f_2 $} (m-2-2)
    (m-1-3) edge node [above] {$ \partial^C_1 $} (m-1-4)
    		edge node [left] {$f_{1}$} (m-2-3)
    (m-1-4) edge node [above] {$ \partial^C_0 $} (m-1-5)
    		edge node [left] {$ f_0 $} (m-2-4)
    (m-2-1) edge node [below] {$ \partial^G_3 $} (m-2-2)
    (m-2-2) edge node [below] {$ \partial^G_2 $} (m-2-3)
    (m-2-3) edge node [below] {$ \partial^G_1 $} (m-2-4)
    (m-2-4) edge node [below] {$ \partial^G_0 $} (m-2-5);
\end{tikzpicture}
\end{adjustbox}
\caption{\label{12complex} A configuration $f \in \text{hom}\left( C, G \right)^0 $ for the abelian $1$,$2$-gauge theory.}
\end{figure}

\begin{figure}[!h]
 			\begin{adjustbox}{max size={0.95 \textwidth}{0.95 \textheight}}
            \centering
\begin{tikzpicture}
\node at (0,0) {$H_n \left( S^2 \right) = \begin{cases}
0, \quad \quad n=0 ,\\
0, \quad \quad n=1, \\
\mathbb{Z}, \quad \quad n=2.
\end{cases} ; \, \,
H_n (G) = \begin{cases}
0, \quad n=0 ,\\
0, \quad n=1, \\
\mathbb{Z}_2, \quad n=2.
\end{cases}$};
\end{tikzpicture}
		\end{adjustbox}
\caption{\label{12Hom}  Homology groups for the $\left( C_{\bullet}, \partial^C_{\bullet} \right)$ and $\left( G_{\bullet}, \partial^G_{\bullet} \right)$ associated to the $\mathbb{Z}_2$, $\mathbb{Z}_4$ abelian $1$,$2$-gauge theory. Where $H_n \left( C \right) \cong H_n \left( S^2 \right)$.}
		\end{figure}

It is now immediate to use Theorem \ref{thm:main} to obtain the GSD:
\begin{align*}
\text{GSD}=|H^0(C,G)|=|H^2(C,H_2(G))| \\ 
=|\text{Hom}(H_2(C),H_2(G))|=2.
\end{align*}
Thus, the model exhibits degeneracy when defined on the $2$-sphere $S^2$. 
\end{exmp}

\begin{rem}
Again, this last calculation is made without any modification of the formalism presented. In fact, the GSD of any abelian higher gauge theory with the underlying structure is contemplated by the Theorem \ref{thm:main}.
\end{rem}

\section{\label{sec:discu}Discussion and Outlook}

In this work we have shown that the cohomology $H^p (C,G)$, presented in Section \ref{sec:math}, is a natural structure for the study of the class of models introduced in Section \ref{sec:model}. These models are considered to be higher dimensional generalizations of the abelian QDMs, as shown in Example \ref{exmpl:AQDM}. In fact, in the spirit of \cite{Baez1,Baez2}, with the restrictions discussed in Example \ref{exmp:1-2gauge}, the formalism presented is suitable for abelian models based on a $2$-group structure and models with higher gauge transformations of any order. The main feature of this formalism was discussed in Section \ref{sec:gsd}, where it was proven that the ground state degeneracy of all these models is characterized by $H^0 ( C, G )$ and that the isomorphism $|H^0(C,G)| \cong \prod_n |H^n(C,H_n(G))|$ provides a natural way to characterize their ground state space $\mathcal{H}_0$. A complete set of quantum numbers for these systems have then been obtained following this idea. Unfortunately, even though this characterization is complete, it comes at the stake of a clear physical interpretation. Ground states are now in a one-to-one correspondence with the elements of $ \prod_n |H^n(C,H_n(G))|$ which, in turn, mix the geometrical and the gauge content.

As it was commented at the end of \ref{subsec:GSD}, since it is possible take $\left( C_{\bullet}, \partial^C_{\bullet} \right)$ to be any free finitely generated complex, going away from strictly geometrical supports will produce new models with topological order that are yet to be explored, and for which the formalism presented can be immediately applied without any modification. Furthermore, even if the examples we presented consideri $\left( C_{\bullet} , \partial^C_{\bullet} \right)$ as coming from closed manifolds, the formalism accounts for compact manifolds with boundaries and more general chain complexes not related to geometry. For all those cases the GSD formula of Theorem \ref{thm:main} is still valid. Some examples of such situations will be explored in a forthcoming paper.

In section \ref{sec:Examples} we show how our models successfully reproduce some well known models with topological order. Concerning the higher gauge examples, the relation between the models we obtain in this work and the notion of symmetry protected phases (SPT) is not fully understood, as it was discussed in \cite{Kapustin13}. However, from what is presented in \cite{Yoshida16}, SPT phases with higher global symmetry in $d$ dimensions are shown to have a duality relation with intrinsic topologically ordered phases with local symmetry via a \textit{gauging}. Thus, the models we introduce in this paper are expected to have a duality relation with higher symmetry SPT phases.

Although a detailed analysis of the excited states is reserved for a forthcoming work, a few comments can be made about them. The generalized projector operators of definition \ref{def:ObOp} belonging to $\mathcal{H} / \mathcal{H}_0$ are characterized by the operators of Definition \ref{def:ObOp} for non trivial parameters. This is ensured by means of Lemma \ref{lemma:AlgABOb}. However, since we are considering only abelian models, the higher dimensional analogs of charges ($n$-charges) and fluxes ($n$-fluxes) have trivial statistics between themselves. We expect these same operators to play a fundamental role when studying the bulk-boundary correspondence when $\left( C_{\bullet}, \partial^C_{\bullet} \right)$ comes from a manifold with boundaries.

Another promising way to further study these models is as quantum error correction codes, as presented \cite{Hastings}. Given that the GSD has been completely characterized, the number of potential logical qubits in the code is known. Moreover, the set of logical operators is also completely determined by the quantum numbers given by Theorem \ref{thm:main}. In fact, the distance of the code is also determined by the ground-state mapping operators together with the size of the discretization of the manifold. We expect the duality between the operators mapping between different ground state and their corresponding measuring operators, explicit in this formalism, to help in the classification of the excited states.

Although it is also an open question whether good quantum correcting codes exist for these models \cite{Shor1,Shor2,Steane}, we can say some preliminary things in this respect. It is known that the encoding space of a product code is related to the encoding spaces of the elementary homological codes that compose it via the K\"{u}nneth formula \cite{Bravyi14}, which is a special case of our GSD formula. Moreover, given a product we can readily construct a model that has the same encoding space. The trivial example is to take $\left( G_{\bullet}, \partial^G_{\bullet} \right)$ to be a graded abelian group. In fact, a variety of models can be constructed having the same GSD with different dynamics for the excited state sector just by considering a different $\left( G_{\bullet}, \partial^G_{\bullet} \right)$. In this sense, the homological product codes of \cite{Bravyi14} are special cases of our formalism.


\appendix


\section{\label{ap:algtop}Simplicial Homology and Cohomology}

Let $K$ be a finite simplicial complex. Denote by $K_n\subset K$ the set of $n$-simplexes. An $n$-simplex is a collection of vertices $[v_0, \dots , v_n]$ so that there is an induced orientation given by the lexicographic order $v_0 < v_1 < \dots < v_n$. Therefore, a vertex permutation $p:\{ 0,\dots n\}\rightarrow \{ 0,\dots n\}$ results in the simplex $[v_{p(0)}, \dots , v_{p(n)}]$ which is considered to be the same simplex, with the same orientation if $p$ has even parity and the opposite orientation if $p$ has odd parity. The orientation is taken into account by saying that $ [v_{p(0)}, \dots , v_{p(n)}]=\pm[v_0, \dots , v_n]$ with a plus sign if $p$ has even parity and a negative sign otherwise. We take this orientation to be the default for any simplex $x\in K$ unless otherwise explicitly stated.

\begin{defn}
A simplicial $n$-chain is a formal sum of $n$-simplexes with coefficients in $\mathbb{Z}$: $\quad \sum_{x\in K_n} c_x x $

Moreover, the set formed by simplicial $n$-chains, denoted $C_n$,  is the free abelian group with basis $K_n$.
\end{defn}

\begin{defn}
Let $(C_{\bullet} , \partial^C_{\bullet})$ be: 
\begin{align*}
\cdots \xrightarrow{} C_{n} \xrightarrow{\partial_n^C} C_{n-1} \xrightarrow{} \cdots \quad
\end{align*}
with the maps $\partial^C_n$ defined by:
\begin{align*}
\partial_n^C ([v_0, \dots , v_n])=\sum_{i=0}^{n} (-1)^i [v_0, \dots ,\hat{v_i} , \dots , v_n]
\end{align*}
where $[v_0, \dots ,\hat{v_i} , \dots , v_n]$ denotes the $(n-1)$-simplex obtained by removing the vertex $v_i$ from $[v_0, \dots , v_n]$.
\end{defn}

\begin{prop}
$(C_{\bullet},\partial^C_{\bullet})$ is a chain complex.
\end{prop}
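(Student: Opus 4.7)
The plan is to show $\partial^C_{n-1}\partial^C_n=0$ directly on a basis simplex $x=[v_0,\dots,v_n]$ of $C_n$, since $C_n$ is free abelian on $K_n$ and $\partial^C_n$ is linear, this suffices.

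First, I would expand the composition by definition. Applying $\partial^C_n$ to $x$ gives the alternating sum $\sum_{i=0}^{n}(-1)^i[v_0,\dots,\widehat{v_i},\dots,v_n]$, and then applying $\partial^C_{n-1}$ to each of these $(n-1)$-simplexes yields a double sum indexed by a pair $(i,j)$ specifying which two vertices have been deleted. The key bookkeeping step will be to split this double sum according to whether the second vertex removed originally sat to the left or to the right of the first removed vertex, because the sign produced by the second boundary depends on the position of the remaining vertex after the first deletion, not on its original index.

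Concretely, for $j<i$ removing $v_j$ from $[v_0,\dots,\widehat{v_i},\dots,v_n]$ contributes the sign $(-1)^j$, while for $j>i$ it contributes $(-1)^{j-1}$ because $v_j$ now occupies position $j-1$. After collecting signs, the two pieces of the sum become
\[
\sum_{j<i}(-1)^{i+j}[v_0,\dots,\widehat{v_j},\dots,\widehat{v_i},\dots,v_n]+\sum_{j>i}(-1)^{i+j-1}[v_0,\dots,\widehat{v_i},\dots,\widehat{v_j},\dots,v_n].
\]
The decisive step is a relabeling in the second sum: swapping the dummy indices $i\leftrightarrow j$ turns it into a sum over $j<i$ of the same basis simplexes appearing in the first sum, but with the opposite overall sign, so the two sums cancel term by term.

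I expect no genuine obstacle here; the only subtlety is getting the sign shift right when the vertex being removed at the second step originally lay to the right of the first omitted vertex. Once that is handled the cancellation is automatic and the relation $\partial^C_{n-1}\partial^C_n=0$ follows, which by Definition \ref{def:ChainCochainComplex} is exactly the condition for $(C_\bullet,\partial^C_\bullet)$ to be a chain complex.
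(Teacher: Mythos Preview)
Your proposal is correct and follows essentially the same argument as the paper's proof: both expand $\partial^C_{n-1}\partial^C_n$ on a basis simplex, split the resulting double sum according to $j<i$ versus $j>i$, and observe that the two pieces cancel term by term due to the sign discrepancy $(-1)^{i+j}$ versus $(-1)^{i+j-1}$. Your write-up is slightly more explicit about why the sign shift occurs (the position of $v_j$ drops by one after deleting $v_i$ when $j>i$) and about the index swap that exhibits the cancellation, but the substance is identical.
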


\begin{proof}
It is straight forward to compute:
\begin{align*}
\partial^C_{n-1} \partial^C_n([v_0, \dots , v_n]) = \partial^C_{n-1} \sum_{i=0}^n (-1)^i [v_0, \dots ,\hat{v_i} , \dots , v_n] \\ 
= \sum_{i=0}^{n} (-1)^i \partial^C_{n-1}([v_0, \dots , \hat{v_i} , \dots , v_n])\\
=\sum_{i=0}^n \sum_{j>i} (-1)^{i}(-1)^{j-1}[v_0, \dots ,\hat{v_j} ,\dots,\hat{v_i}, \dots , v_n] + \\ 
+ \sum_{i=0}^n \sum_{j<i} (-1)^{i}(-1)^{j}[v_0, \dots ,\hat{v_i} ,\dots,\hat{v_j}, \dots , v_n]
\end{align*}
Each term $[\dots ,\hat{v_i} ,\dots,\hat{v_j}, \dot{}]$ appears in both sums with opposite signs and therefore $\partial^C_{n-1}\circ\partial^C_n([v_0, \dots , v_n])=0$. The result follows from this. 
\end{proof}

When choosing $H$, an abelian group, we define the cochain complex as below:
\begin{align*}
\cdots \xleftarrow{} Hom(C_{n},H) \xleftarrow{d_{n-1}} Hom(C_{n-1},H) \xleftarrow{} \cdots
\end{align*}
where $d_{n-1}(f) := f  \partial_n^C$. This cochain complex gives rise to the cohomology of $C$ with coefficients in $H$:
\begin{align*}
H^n(C,H)=\text{ker} (d_n)/ \text{im} (d_{n-1})
\end{align*}

In Appendix \ref{ap:Brown} we will be dealing with cohomology with coefficients in the homology groups $H_m(G)$. It will be convenient to denote by $d_n^m:Hom(C_n,H_m(G))\rightarrow Hom(C_{n+1},H_m(G))$ the map $d_n$ with coefficients in $H_m(G)$. Using this notation $H^n(C, H_{n-p}(G))=\text{ker}(d_n^{n-p})/\text{im}(d_{n-1}^{n-p})$. For further details on these topics with focus on simplicial complexes we refer to \cite{Robins} and the references therein.

\section{Isomorphism consequence of Brown's theorem }\label{ap:Brown}

In \cite{Brown} it is shown that the space $H^p (C,G)$ is isomorphic to $\prod_n H^n (C, H_{n-p}(G))$. This result will help on allows us to interpret geometrically the cohomology group $H^0(C,G)$, linked to the ground state subspace. Under these conditions we use the fact that in order to have an isomorphism $\alpha: \prod_n H^n (C; H_{n-p}(G)) \rightarrow H^p(C,G)$, it is sufficient to build an injective morphism. To begin with, let $\left( G_{\bullet} , \partial^G_{\bullet} \right)$ be the chain complex of abelian groups of section \ref{sec:math}. Let $H(G)$ be the chain complex constructed with the homology groups $H_{n}(G)$ of $G$, i.e.,
\begin{align*}
\dots H_{n+1}(G)\xrightarrow{0} H_n(G) \xrightarrow{0} H_{n-1}(G) \dots,
\end{align*}
where the boundary map $0$ is the trivial one. Notice that there is a canonical projection $\pi_n: \text{ker}(\partial_n^G)\rightarrow H_n(G)$. It is then straightforward to check that there are morphisms $\phi_n: H_n(G) \rightarrow \text{ker}(\partial_n^G)\subset G_n$ satisfying the relation $\pi_n \circ \phi_n=\text{id}$. Fix a collection $\phi=\{\phi_n\}_n$ of such morphisms. Next, given $f = \{ f_n: C_n \rightarrow H_{n-p}(G) \}\in \text{hom}(C,H(G))^p$, consider the morphism $\alpha_{\phi}^p: \text{hom}(C,H(G))^p \rightarrow \text{hom}(C,G)^p$ defined by:
\begin{align*}
(\alpha_{\phi}^p(f))_m= \phi_{m-p} \circ f_m \;.
\end{align*}
It is easy to show that the boundary operator in $\text{hom}(C,H(G))$ factors out this morphism, evident when calculating:
\begin{multline*}
(\delta^p (\alpha_{\phi}^p (f)))_m = (\alpha_\phi^p (f))_{m-1} \partial^C_m -(-1)^p \partial^G_{m-p}(\alpha_\phi^p (f))_m = \\ =\phi_{m-1-p} f_{m-1} \partial^C_m -(-1)^p \partial^G_{m-p} \phi_{m-p} f_m = \\ =\phi_{m-1-p}  f_{m-1} \partial^C_m=\phi_{m-1-p} d_{m-1}^{m-1-p}(f_{m-1})=\\=(\alpha_\phi^p (d(f)))_m ,
\end{multline*}
where we have used the fact that the image of $\phi_{m-p}$ is contained in $\text{ker} \left( \partial^G_{m-p} \right)$, i.e. $\partial^G_{m-p} \phi_{m-p}=0$, and $d(f)$ has components $(d(f))_m = d_{m-1}^{m-1-p}(f_{m-1}) = f_{m-1} \partial^C_m$.

It follows from the above discussion that if we choose $f_n$ belonging to $\text{ker} (d_n)$, i.e. $d(f)=0$, their image will also be in the kernel of the boundary operator, i.e., $\alpha_\phi^p(f)\in \text{ker}(\delta^p)$. This is guaranteed by $\delta^p(\alpha_\phi^p (f))=\alpha_\phi^p (d(f))=0$. Moreover, if we pick $g\in \text{hom}(C,H(G))^{p-1}$ then: 
$$\alpha_\phi^p (f)+\alpha_\phi^{p-1}( d(g))=\alpha_\phi^p (f)+\delta^{p-1}(\alpha_\phi^{p-1} (g)).$$

In particular, the morphism preserves the equivalence classes in the cohomology groups. Correspondingly, we can map $\sum [f_n] \in \prod_n H^n (C; H_{n-p}(G))$ into $[\alpha_\phi^p (f)]\in H^p(C;G)$ in a well defined manner by only choosing a representative $f_n$ from each cohomology class $[f_n]$. We formalize the latter as:

\begin{defn}\label{def:alpha}
The morphism $\alpha_\phi^p:\prod_n H^n (C; H_{n-p}(G)) \, \rightarrow \, H^p(C ; G)$:
\begin{align*}
\alpha_\phi^p(\sum [f_n]) := [\alpha_\phi^p (f)],
\end{align*}
where $\sum [f_n]$ is a general element of $\prod_n H^n (C; H_{n-p}(G))$ with $[f_n]\in H^n (C; H_{n-p}(G))$ and $f\in \text{hom}(C,H(G))^p$ consists of a collection of maps $f_n:C_n\rightarrow H_{n-p}(G)$ with $f_n \in [f_n]$. 
\end{defn}

This map is well defined, since it is clear that it does not depend on the choice of representative morphisms. To see this let us take $f'\in \text{hom}(C,H(G))$ such that $f'_n \in [f_n]$. This is, there is some $g\in \text{hom}(C,H(G))^{p-1}$ such that $f'=f+dg$ , hence:
\begin{align*}
[\alpha_\phi^p (f')]=[\alpha_\phi^p (f) +\alpha_\phi^p (d (g))] 
=&[\alpha_\phi^p (f) + \delta^{p-1}(\alpha_\phi^{p-1} (g))]\\
=&[\alpha_\phi^p (f)].
\end{align*}

It remains to prove that $\alpha_\phi^p$ is an injective morphism. To do this it is sufficient to show that the kernel $\text{ker}(\alpha_\phi^p)$ is trivial:

\begin{prop}[injectivity of $\alpha_\phi^p$]
The morphism $\alpha_\phi^p$, defined in \ref{def:alpha},  is injective.
\end{prop}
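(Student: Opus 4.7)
My plan is to show $\ker \alpha_\phi^p = 0$ directly by taking a class $\sum_n [f_n]$ mapped to zero and constructing an explicit cobounding element in $\text{hom}(C, H(G))^{p-1}$ for the representative $f$. The hypothesis $\alpha_\phi^p(\sum_n [f_n]) = 0$ unravels to the existence of $h \in \text{hom}(C,G)^{p-1}$ with $\alpha_\phi^p(f) = \delta^{p-1}h$, which by Definition \ref{def:InternalHom} reads componentwise as
\begin{align*}
\phi_{m-p} \circ f_m \;=\; h_{m-1}\,\partial^C_m \;+\; (-1)^p\, \partial^G_{m-p+1}\, h_m
\end{align*}
for all $m$. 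This is the central identity the rest of the proof manipulates.

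The key step is to apply $\pi_{m-p}$ to both sides. The left-hand side lies in $\ker \partial^G_{m-p}$ by construction of $\phi$, and applying $\partial^G_{m-p}$ to the right-hand side together with $\partial^G\partial^G = 0$ shows that $h_{m-1}\partial^C_m$ also takes values in $\ker \partial^G_{m-p}$, so $\pi_{m-p}$ may be applied. Using $\pi_{m-p}\circ\phi_{m-p} = \text{id}$ and the fact that $\partial^G_{m-p+1}h_m$ lies in $\text{im}\,\partial^G_{m-p+1}$ and is thus killed by $\pi_{m-p}$, I obtain $f_m = \pi_{m-p}\circ h_{m-1}\circ \partial^C_m$. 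An immediate corollary is that $f_m$ vanishes on $\ker \partial^C_m$ and therefore factors through $\text{im}\,\partial^C_m \subset C_{m-1}$, giving a well-defined $\bar g_{m-1} : \text{im}\,\partial^C_m \to H_{m-p}(G)$ with $f_m = \bar g_{m-1}\circ \partial^C_m$.

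It remains to extend $\bar g_{m-1}$ to a global morphism $g_{m-1} : C_{m-1} \to H_{m-p}(G)$. Once done, the family $g := (g_{m-1})_m$ satisfies $dg = f$ in $\text{hom}(C, H(G))^p$ (where the differential is just $g \mapsto g\circ\partial^C$ since $H(G)$ carries the zero differential), hence $[f_m] = 0$ in every $H^m(C, H_{m-p}(G))$, and $\sum_n[f_n] = 0$. To build the extension I would exploit the freedom in $h$: replacing $h$ by $h + \delta^{p-2}k$ for a suitable $k \in \text{hom}(C,G)^{p-2}$ does not change $\alpha_\phi^p(f) = \delta^{p-1}h$, and can be used to arrange that the modified $h_{m-1}$ takes values inside $\ker \partial^G_{m-p}$; then $g_{m-1} := \pi_{m-p}\circ h_{m-1}$ globally extends $\bar g_{m-1}$.

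The principal obstacle is constructing this corrector $k$. Imposing $\partial^G_{m-p}h'_{m-1}=0$ after modification translates into solving $\partial^G_{m-p}\,k_{m-2}\,\partial^C_{m-1} = -\partial^G_{m-p}\,h_{m-1}$, and one verifies from the componentwise equation that the right-hand side vanishes on $\text{im}\,\partial^C_m$, so it descends to a map on $C_{m-1}/\text{im}\,\partial^C_m$. The remaining factorization through $\partial^C_{m-1}$ must then be handled by a careful downward induction on $m$, leaning on the freeness of each $C_n$ on the basis $K_n$ together with the splittings $\phi_n$, which render $G_\bullet$ effectively chain-equivalent to $H(G)$ with zero differential. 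That equivalence is precisely the structural fact underlying Brown's theorem and is what ultimately forces $\alpha_\phi^p$ to be an isomorphism.
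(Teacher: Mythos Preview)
Your core argument is the paper's: assume $\alpha_\phi^p(f) = \delta^{p-1}h$, write out the componentwise identity, and compose with $\pi_{m-p}$ to obtain $f_m = \pi_{m-p}\circ(h_{m-1}\partial^C_m)$. At that point the paper simply rewrites this as $f_m = d^{m-p}_{m-1}(\pi_{m-p}\, h_{m-1})$ and concludes $[f_m]=0$, treating $\pi_{m-p}\circ h_{m-1}$ as a cochain defined on all of $C_{m-1}$ without further comment.

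You correctly observe that $h_{m-1}$ need not take values in $\ker\partial^G_{m-p}$ globally, so that $\pi_{m-p}\circ h_{m-1}$ is a priori only defined on $\text{im}\,\partial^C_m$; this is precisely the step the paper glosses over. Your proposed remedy, however --- correcting $h$ by $\delta^{p-2}k$ and running a downward induction using freeness of the $C_n$ --- is only sketched, and your closing sentence essentially appeals to the chain equivalence underlying Brown's theorem, which is what this proposition is meant to feed into. So while your write-up is more scrupulous than the paper's about where the difficulty lies, it does not actually close the gap you flag; the paper simply declines to flag it. Apart from this one point, the two arguments are identical.
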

\begin{proof}
Take $\sum [f_n]$ such that their image under the morphism belongs to the trivial class, namely, $\alpha_\phi^p(\sum [f_n]) = [\alpha_\phi^p (f)] = [0]$, In particular, this means that there is some $t \in \text{hom} (C,G)^{p-1}$ such that $\alpha_\phi^p(f) = \delta^{p-1}(t)$. Therefore,
\begin{multline*}
(\alpha_\phi^{p} (f))_m = (\delta^{p-1}(t))_m \quad \implies \\ 
\phi_{m-p}f_m=t_{m-1} \partial^C_m -(-1)^{p-1}\partial^G_{m-p} t_m.
\end{multline*}

Notice that each image of each term is contained in $\text{ker}(\partial^G_{m-p})$, so when composing with $\pi_{m-p}$ from the left, we get $f_m=\pi_{m-p}t_{m-1} \partial^C_m$, where we have used the relations $\pi_{m-p} \phi_{m-p}=id$ and $\pi_{m-p}\partial^G_{m-p}=0$. Therefore, $f_m=d_{m-1}^{m-p}(\pi_{m-p} t_{m-1})$, equivalently, $[f_m]=[0]$. We now conclude that $\sum [f_n]=[0]$, hence, $\text{ker}(\alpha_\phi^p)$ is trivial. This is, $\alpha_\phi^p$ is injective. as claimed.
\end{proof}

We highlight that even though we made a choice of $\phi$ when defining $\alpha_\phi^p$ the actual morphism does not depend on it. To see this let us choose another collection $\psi$ obeying $\pi_m \circ \psi_m = id$. Then, $(\alpha_\phi^p (f) -\alpha_\psi^p (f))_m=(\phi-\psi)_{m-p} f_m$ but $\pi_m \circ (\phi-\psi)_m=0$, this is, there is some $\xi$ such that $(\phi-\psi)_m=\partial^G_m \xi_m$, therefore:
\begin{multline*}
(\alpha_\phi^p (f) -\alpha_\psi^p (f))_m = \partial^G_{m-p} \xi_{m-p} f_m \\ 
= (-1)^p\left( \xi_{m-1-p} f_{m-1} \partial^C_m -(-1)^{p-1} \partial^G_{m-p} \xi_{m-p} f_m  \right) \\
= \delta^{p-1}((-1)^p \alpha_\xi^p (f))_m,
\end{multline*}
where we have assumed that $f_{m-1}\in \text{ker}(d_{m-1}^{m-1-p})$ and consequently $f_{m-1} \partial^C_m=0$. The above argument implies that $[\alpha_\phi^p (f)]=[\alpha_\psi^p(f)+\delta^{p-1}(\alpha_\xi^p (f))]=[\alpha_\psi^p (f)]$, equivalently $\alpha_\phi^p=\alpha_\psi^p$. We formalize this in the following Proposition:

\begin{prop}
There is a well-defined injective morphism $\alpha^p:\prod_n H^n (C, H_{n-p}(G))\rightarrow H^p(C,G)$ given by $\alpha^p = \alpha_\phi^p$ for some choice of $\phi$. Moreover, this morphism is actually an isomorphism since it is an injective morphism between isomorphic groups.
\end{prop}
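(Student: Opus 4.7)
The plan is essentially to package the three ingredients already prepared in the preceding discussion and then invoke Brown's theorem to upgrade injectivity to an isomorphism. First I would observe that for any choice of splitting $\phi = \{\phi_n\}$ with $\pi_n \circ \phi_n = \mathrm{id}$, the formula $(\alpha_\phi^p f)_m = \phi_{m-p} \circ f_m$ descends to a well-defined map on cohomology. This was already checked in the lines preceding Definition~\ref{def:alpha}: replacing $f$ by $f + d g$ gives $\alpha_\phi^p(f + dg) = \alpha_\phi^p(f) + \delta^{p-1}(\alpha_\phi^{p-1}(g))$, so cohomology classes are preserved. Combined with the proposition just proved, which shows $\ker(\alpha_\phi^p) = 0$, this hands us a well-defined injective morphism of abelian groups.

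Next I would argue that the induced map on cohomology does not depend on the choice of $\phi$, so we may legitimately write $\alpha^p := \alpha_\phi^p$. Given any other splitting $\psi$, the difference $\phi_n - \psi_n$ vanishes under $\pi_n$ and therefore factors as $\partial^G_n \xi_n$ for some collection $\xi$. The computation displayed just before the statement then shows, for a cocycle $f$, that $\alpha_\phi^p(f) - \alpha_\psi^p(f) = \delta^{p-1}\!\bigl((-1)^p \alpha_\xi^p(f)\bigr)$, so the two morphisms agree on cohomology classes.

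Finally, for the isomorphism claim I would invoke Brown's theorem (Theorem~\ref{thm:Brown}) to assert that the source $\prod_n H^n(C, H_{n-p}(G))$ and target $H^p(C,G)$ are abstractly isomorphic abelian groups. Under the finiteness hypotheses of Remark~\ref{rem:finiteHS}, both sides are finite groups of the same order, so an injective morphism between them is automatically surjective. Hence $\alpha^p$ is an isomorphism.

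The main obstacle is essentially bookkeeping rather than any new idea: all of the cohomological computations have already been carried out in the preceding paragraphs, and the only genuinely external input is Brown's theorem. If one wanted a self-contained proof without citing Brown, one would instead need to construct a two-sided inverse by picking splittings compatible across all degrees simultaneously, which would require substantially more care and would essentially recover Brown's argument.
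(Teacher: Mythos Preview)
Your proposal is correct and mirrors the paper's own treatment: the proposition in the paper carries no separate proof block, and its content is exactly the packaging of the three preceding ingredients (well-definedness from the discussion before Definition~\ref{def:alpha}, injectivity from the previous proposition, and $\phi$-independence from the computation immediately preceding the statement) together with Brown's theorem. The one point you make explicit that the paper leaves implicit is the use of the finiteness hypotheses of Remark~\ref{rem:finiteHS} to conclude that an injective morphism between isomorphic \emph{finite} groups is automatically surjective; this is a worthwhile clarification, since the bare phrase ``injective morphism between isomorphic groups'' does not suffice for infinite groups.
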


\section{Local Decomposition}\label{ap:localdec}

We notice that the group $\text{Hom}(C_n,G_{n-p})$ is equivalent to $ G_{n-p}^{|K_n|}=\{ f: K_n\rightarrow G_{n-p}\}$ since $K_n$ is a basis for $C_n$. Moreover, for $x, \, y \in K_n$ and $g\in G_{n-p}$ one can define 
\begin{align*}
&gx^*:K_n \rightarrow G_{n-p} \\
&(gx^*)(y) = \begin{cases} g \text{ if } x=y \\ 0 \text{ otherwise } \end{cases}
\end{align*}

Furthermore, these functions provide a basis for $G_{n-p}^{|K_n|}$ since any $f\in G_{n-p}^{|K_n|}$ can be written as $f = \sum_x f(x)x^*$. Therefore, we have:
\begin{align} \label{homeq}
\text{hom}(C,G)^p = \prod_n \text{Hom}(C_n,G_{n-p}) = \prod_n G_{n-p}^{|K_n|},
\end{align}
allowing to construct a basis for $\text{hom}(C,G)^p$ using the basis for $G_{n-p}^{|K_n|}$. To achieve this we extend $gx^*\in G_{n-p}^{|K_n|}$ to $gx^*\in \text{hom}(C,G)^p$ in the following way:
\begin{align*}
(gx^*)_m = \begin{cases}
gx^*, \text{if}\ m=n, \\
0, \text{if}\ m\neq n
\end{cases}
\end{align*}
and observe that any $j \in \text{hom}(C,G)^p$ can be written as:
\begin{align}\label{texp}
j = \sum_n \sum_{x \in K_n} j_n(x)x^*
\end{align}
so these maps provide a basis for $\text{hom}(C,G)_p$.

In a similar fashion, using the dualization procedure sketched in \ref{sec:math}, from equation (\ref{homeq}) we have:
\begin{align} 
\nonumber \text{hom}(C,G)_p & = \prod_n \hat{G}_{n-p}^{|K_n|} = \prod_n \text{Hom} \left( C_n, \hat{G}_{n-p} \right)  \\ 
\label{homeqdual} & = \text{hom} ( C, \hat{G} )_p,
\end{align}
where the last equality comes from equation (\ref{def:hom}). We will use mainly the first notation throughout the paper, unless some confusion arises. By the same token, given $r \in \hat{G}_{n-p}$ and $x\in K_n$ we obtain $rx_\ast \in \text{hom}(C,G)_p$ defined by:
\begin{align*}
r x_\ast (f) = r(f_n(x))
\end{align*}
and similarly, any $k \in \text{hom}(C,G)_p$ can be written as: 
\begin{align}\label{sexp}
k =\sum_n \sum_{x \in K_n} k_n(x)x_\ast \quad.
\end{align}
where $k_n(x)\in \hat{G}_{n-p}$ is defined by $k_n(x)(g) = s(gx^\ast)$.

As a consequence, we have that any generalized gauge transformation $A_t$ and holonomy $B_s$, can be decomposed as:
\begin{align*}
A_t & = A_{\sum_{n} \sum_{x \in K_n} t_n (x) x^*} = \prod_n \prod_{x\in K_n} A_{t_n(x)x^*} \quad, \\ 
B_m & = B_{\sum_{n} \sum_{x \in K_n} m_n (x) x_*} = \prod_n \prod_{x\in K_n} B_{m_n (x)x_*}.
\end{align*}

Moreover, we can use this decomposition to prove:

\begin{prop}\label{AtBsexp2}
Given $s \in \text{hom}(C,G)_{-1}$ and $ v \in \text{hom}(C,G)^{1}$, the following identities are satisfied:
\begin{align*}
\sum_t \chi_s (t) A_t & = \prod_n \prod_{x \in K_n} \sum_{g\in G_{n+1}}\chi_{s} \left( g x^* \right) A_{g x^*} \quad, \\ 
\sum_m \chi_m (v) B_m & = \prod_n \prod_{x \in K_n} \sum_{r\in \hat{G}_{n-1}} \chi_{r x_*} \left( v \right) B_{r x^*} \quad.
\end{align*}
\end{prop}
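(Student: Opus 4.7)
The plan is to reduce both identities to a single mechanism: the abelian groups $\text{hom}(C,G)^{-1}$ and $\text{hom}(C,G)_{1}$ decompose as direct products of local pieces indexed by pairs $(n,x)$ with $x \in K_n$, and under this decomposition both the character pairing and the assignments $t \mapsto A_t$, $m \mapsto B_m$ factor multiplicatively. Ordinary distributivity then converts a single sum over maps into a product of local sums.

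For the first identity, I would first invoke the basis decomposition (\ref{texp}) to write any $t \in \text{hom}(C,G)^{-1}$ uniquely as $t = \sum_n \sum_{x \in K_n} t_n(x)\, x^{\ast}$ with $t_n(x) \in G_{n+1}$, so that summing over $t$ becomes independent summation of each component $t_n(x)$ over $G_{n+1}$. From the algebraic relation $A_{t+t'} = A_t A_{t'}$ in (\ref{ABalgebra}), iteration yields the factorization $A_t = \prod_n \prod_{x \in K_n} A_{t_n(x) x^{\ast}}$, which is unambiguous because all shift operators pairwise commute. Since $\chi_s$ is a group morphism, the character factorizes in the same way: $\chi_s(t) = \prod_n \prod_{x \in K_n} \chi_s(t_n(x) x^{\ast})$. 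Substituting both factorizations into $\sum_t \chi_s(t) A_t$ and pulling the sum through the products by distributivity --- legitimate precisely because the factors indexed by distinct $(n,x)$ commute --- produces the claimed product of local sums.

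The second identity is established by an entirely analogous argument. Using the dual decomposition (\ref{sexp}), any $m \in \text{hom}(C,G)_{1}$ is expressed as $m = \sum_n \sum_{x \in K_n} m_n(x)\, x_{\ast}$ with $m_n(x) \in \hat G_{n-1}$; the multiplicativity $B_{m+m'} = B_m B_{m'}$ in (\ref{ABalgebra}) again produces a factorization of $B_m$ into pairwise commuting local clock operators, while $m \mapsto \chi_m(v)$ is a morphism in its first argument and factorizes correspondingly. Distributivity then closes the argument. The only step that is not purely formal is the sum--product interchange, and this is immediate once one observes that the factors for distinct pairs $(n,x)$ lie in commuting subalgebras; I therefore anticipate no genuine obstacle beyond careful bookkeeping of the indices.
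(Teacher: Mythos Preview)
Your proposal is correct and follows essentially the same approach as the paper: the paper also decomposes $t$ (respectively $m$) into its local components via (\ref{texp}) and (\ref{sexp}), factorizes both the character and the operator using their multiplicativity, and then interchanges the sum over all tuples $(g_x)_{n,x}$ with the product over $(n,x)$ by distributivity. The only difference is cosmetic notation and that you are slightly more explicit about why the sum--product interchange is legitimate.
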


\begin{proof} 
Let us start by writing   $t := \sum_{n,x} g_x x^\ast$ and realizing that summing over all $t \in \text{hom}(C,G)^{-1}$ is equivalent to summing over all possible values of $g_x$, hence:
\begin{multline*}
\sum_t\chi_s(t)A_t  = \sum_{g_x}  \chi_s \left( \sum_{n, x \in K_n} g_x x^\ast\right) A_{\sum_{n,x \in K_n} g_x x^*} \\ 
=\sum_{g_x} \left( \prod_{n} \prod_{x \in K_n} \chi_{s} \left( g_x x^* \right)\right) \left(\prod_{n} \prod_{x \in K_n} A_{g_xx^*} \right) \\ 
= \sum_{g_x} \prod_{n} \prod_{x \in K_n}  \chi_{s} \left( g_x x^* \right) A_{g_xx^*} \\ 
= \prod_{n} \prod_{x \in K_n} \sum_{g\in G_{n+1}} \chi_{s} \left( g x^* \right) A_{gx^*}.
\end{multline*}

Similarly, let us write $m := \sum_{n,x} r_x x_*  \in \text{hom} (C,G)_{1}$. It is clear that summing over all $m \in \text{hom}(C,G)_{1}$ is equivalent to summing over all possible values of $r_x$, hence:
\begin{multline*}
\sum_m \chi_m (v) B_m  = \sum_{r_x}  \chi_{\sum_{n, x \in K_n} r_x x_\ast} \left( v\right) B_{\sum_{n, x \in K_n} r_x x_*} \\ 
= \sum_{r_x} \left( \prod_{n} \prod_{x \in K_n} \chi_{r_x x_*} \left( v \right) \right) \left(\prod_{n} \prod_{x \in K_n} B_{r_x x_*} \right) \\ 
= \sum_{r_x} \prod_{n} \prod_{x \in K_n}  \chi_{r_x x_*} \left( v \right) B_{r_x x_*} \\ 
= \prod_{n} \prod_{x \in K_n} \sum_{r \in \hat{G}_{n-1}} \chi_{r_x x_*} \left( v \right) B_{r_x x_*},
\end{multline*}
as was to be shown.
\end{proof}

From here, it is clear that we can also write the generalized projector operators as:
\begin{prop}[Generalized projector operators]\label{prop:ObOp}
Given $s \in \text{hom}(C,G)_{-1}, \, m \in \text{hom}(C,G)^1$ we can write:
\begin{align*}
\mathcal{A}_s := \prod_{n}\prod_{x \in K_n} A_{x}^{s_n \left( x \right)} \quad , \quad \mathcal{B}_m := \prod_{n}\prod_{x \in K_n} B_{x}^{m_n \left( x \right)} \quad,
\end{align*}
where we have used the decompositions (\ref{sexp}) and (\ref{texp}) for $s$ and $m$ respectively, as well as Definition \ref{def:AxBx}.
\end{prop}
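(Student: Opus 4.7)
The plan is to derive the local decomposition by chaining together three facts that are all already in place: the definition of $\mathcal{A}_s$ and $\mathcal{B}_m$ via a Fourier sum (Definition \ref{def:ObOp}), the factorization of those sums over basis elements $x \in K_n$ (Proposition \ref{AtBsexp2}), and the definitions of the local projectors $A_x^r$, $B_x^g$ (Definition \ref{def:AxBx}). The only content of the statement is to show the factor-by-factor identification and that the global normalization matches the product of local normalizations.

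First I would substitute the expression for $\sum_t \chi_s(t) A_t$ given by Proposition \ref{AtBsexp2} into the definition of $\mathcal{A}_s$. The resulting expression is a prefactor $1/|\text{hom}(C,G)^{-1}|$ times $\prod_n \prod_{x\in K_n} \sum_{g\in G_{n+1}} \chi_s(gx^\ast)\, A_{gx^\ast}$. At this point I would invoke the locality relation $\chi_s(gx^\ast) = \chi_{s_n(x)}(g)$ from equation (\ref{pairings}), which rewrites each inner sum as $\sum_{g\in G_{n+1}} \chi_{s_n(x)}(g)\, A_{gx^\ast} = |G_{n+1}|\, A_x^{s_n(x)}$, by comparison with Definition \ref{def:AxBx}.

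Next I would compare the two normalization factors. By Remark \ref{rem:finiteHS} we have $|\text{hom}(C,G)^{-1}| = \prod_n |G_{n+1}|^{|K_n|}$, while pulling out the $|G_{n+1}|$ factors from each local term yields exactly $\prod_n \prod_{x\in K_n} |G_{n+1}| = \prod_n |G_{n+1}|^{|K_n|}$. These cancel, leaving $\mathcal{A}_s = \prod_n \prod_{x \in K_n} A_x^{s_n(x)}$ as claimed. The argument for $\mathcal{B}_m$ is entirely parallel: apply Proposition \ref{AtBsexp2} to $\sum_k \chi_k(m) B_k$, use $\chi_{rx_\ast}(m) = \chi_r(m_n(x))$ to identify each local sum as $|G_{n-1}|\, B_x^{m_n(x)}$, and cancel normalizations via $|\text{hom}(C,G)_1| = |\text{hom}(C,G)^1| = \prod_n |G_{n-1}|^{|K_n|}$.

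There is essentially no obstacle here beyond bookkeeping of indices; the substantive work has already been done in Proposition \ref{AtBsexp2}, which splits the big Fourier sum into a product of local sums that each involve only a single basis element $x$. The one subtle point to watch is the distinction between the grading of $\text{hom}(C,G)^p$ and $\text{hom}(C,G)_p$: one must check that $s \in \text{hom}(C,G)_{-1}$ forces $s_n(x) \in \hat{G}_{n+1}$ (so it is indeed a legal superscript for $A_x^{\cdot}$), and similarly $m \in \text{hom}(C,G)^1$ gives $m_n(x) \in G_{n-1}$ (a legal superscript for $B_x^{\cdot}$). These are direct consequences of equations (\ref{def:hom}) and (\ref{homeqdual}) and require no additional argument.
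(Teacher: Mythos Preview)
Your argument is correct and is exactly the route the paper intends: it states the proposition immediately after Proposition~\ref{AtBsexp2} with only the remark ``From here, it is clear that we can also write\ldots'', leaving the identification via Definition~\ref{def:AxBx}, equation~(\ref{pairings}), and the normalization count of Remark~\ref{rem:finiteHS} implicit. You have simply spelled out those steps.
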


\section{Supplementary Algebraic Relations}\label{ap:algrel}

The following lemma, follows immediately from the Definition \ref{def:PQoperators}, when applying the operators over a state $\ket{f}$ with $f \in \text{hom} (C,G)^0$, so we omit the proof:

\begin{lem}[Shift and Clock operators algebra]\label{lem:PQalgebra}
Let $t, t_1 \text{ and } t_2 \in \text{hom}\left( C, G \right)^0$ and $m, m_1 \text{ and } m_2 \in \text{hom} \left( C, G \right)_0$, the relations
\begin{align*}
P_{t_1} P_{t_2} = P_{t_1 + t_2} \quad &, \quad Q_{m_1} Q_{m_2} = Q_{m_1 + m_2} , \\ 
Q_m P_t  & = \chi_{m} \left( t \right) P_t Q_m ,
\end{align*}
hold.
\end{lem}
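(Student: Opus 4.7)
The plan is to verify each of the three identities by applying both sides to an arbitrary basis element $\ket{f}\in\mathcal{H}$, $f\in\text{hom}(C,G)^0$, using Definition \ref{def:PQoperators}. Since $\{\ket{f}\}$ is an orthonormal basis of $\mathcal{H}$ by (\ref{Hilbertbasis}), checking equality on every such basis state suffices to establish equality of operators, and this reduces each identity to a short algebraic manipulation in $\text{hom}(C,G)^0$ together with a property of the characters $\chi_m$.

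First, for the shift composition, I would apply $P_{t_1}P_{t_2}$ to $\ket{f}$: by definition this is $P_{t_1}\ket{f+t_2}=\ket{(f+t_2)+t_1}=\ket{f+(t_1+t_2)}=P_{t_1+t_2}\ket{f}$, where I use that addition in the abelian group $\text{hom}(C,G)^0$ is associative and commutative. Next, for the clock composition, $Q_{m_1}Q_{m_2}\ket{f}=\chi_{m_2}(f)Q_{m_1}\ket{f}=\chi_{m_1}(f)\chi_{m_2}(f)\ket{f}$. The key step here is to invoke the fact that $m\mapsto\chi_m$ sends the group operation of $\text{hom}(C,G)_0$ to pointwise multiplication of $U(1)$-valued characters, so $\chi_{m_1}(f)\chi_{m_2}(f)=\chi_{m_1+m_2}(f)$, giving $Q_{m_1+m_2}\ket{f}$.

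Finally, for the commutation relation, I would compute both orderings on $\ket{f}$. On one side, $Q_mP_t\ket{f}=Q_m\ket{f+t}=\chi_m(f+t)\ket{f+t}$, and on the other, $\chi_m(t)P_tQ_m\ket{f}=\chi_m(t)\chi_m(f)\ket{f+t}$. Equality follows because $\chi_m\colon\text{hom}(C,G)^0\to U(1)$ is itself a group homomorphism (equivalently, by the pairing $\chi_m(f)=m(f)$ viewed through the identification of $\text{hom}(C,G)_0$ with characters of $\text{hom}(C,G)^0$), so $\chi_m(f+t)=\chi_m(f)\chi_m(t)$.

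Honestly, there is no real obstacle: the whole content is that sums in $\text{hom}(C,G)^{0}$ behave the way the shift operator requires, and that the pairing $(m,f)\mapsto\chi_m(f)$ is bi-homomorphic in both slots. The only mildly delicate point worth flagging explicitly in the write-up is that one must distinguish the roles of the two homomorphism properties of $\chi$, namely homomorphy in $m$ (needed for the $Q_{m_1}Q_{m_2}$ identity) and homomorphy in the argument $f$ (needed for the $QP=\chi\,PQ$ identity); both are immediate from the conventions fixed in Section \ref{sec:math}.
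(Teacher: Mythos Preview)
Your proof is correct and is precisely the approach the paper indicates: verify each identity by acting on an arbitrary basis state $\ket{f}$ and use the bi-homomorphic property of the pairing $(m,f)\mapsto\chi_m(f)$. The paper in fact omits the details entirely, so your write-up simply fills in what the authors left as immediate.
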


Using the previous lemma, and the Definitions \ref{def:AopBop}, we also have:

\begin{lem}[$A_t$ and $B_m$ algebra]\label{lem:ABalgebra}
Let $A_t$ and $B_m$ be as defined in \ref{def:AopBop}. They satisfy the following relations:
\begin{align*}
A_t A_{t'}=A_{t+t'}=A_{t'} A_t \,\, &, \,\, B_m B_{m'}=B_{m+m'}=B_{m'} B_m \,\,, \\ 
A_t B_m & = B_m A_t.
\end{align*}
\end{lem}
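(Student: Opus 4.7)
The plan is to reduce all three relations to Lemma \ref{lem:PQalgebra} (the shift–clock algebra) together with two elementary facts: $\delta^{-1}$ and $\delta_{1}$ are group morphisms, and the cochain identity $\delta^{0}\delta^{-1}=0$ rewritten in its dual multiplicative form.

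For the first block $A_{t}A_{t'}=A_{t+t'}=A_{t'}A_{t}$, I would substitute $A_{t}=P_{\delta^{-1}t}$ and apply the additive part of Lemma \ref{lem:PQalgebra}:
\[
A_{t}A_{t'}=P_{\delta^{-1}t}\,P_{\delta^{-1}t'}=P_{\delta^{-1}t+\delta^{-1}t'}=P_{\delta^{-1}(t+t')}=A_{t+t'},
\]
where the penultimate step is linearity of $\delta^{-1}$. Commutativity then follows at once from commutativity of addition in $\text{hom}(C,G)^{-1}$. The identities for $B_{m}$ are obtained by the identical chain of manipulations after substituting $B_{m}=Q_{\delta_{1}m}$ and using linearity of $\delta_{1}$ together with the additive part of Lemma \ref{lem:PQalgebra}.

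For the commutation $A_{t}B_{m}=B_{m}A_{t}$ the mixed relation in Lemma \ref{lem:PQalgebra} yields
\[
B_{m}A_{t}=Q_{\delta_{1}m}\,P_{\delta^{-1}t}=\chi_{\delta_{1}m}(\delta^{-1}t)\,P_{\delta^{-1}t}\,Q_{\delta_{1}m}=\chi_{\delta_{1}m}(\delta^{-1}t)\,A_{t}B_{m},
\]
so the claim reduces to proving that the scalar prefactor is trivial. The key step is to recognise that $\delta_{1}$ is dual to $\delta^{0}$ under the character pairing, i.e.\ $\chi_{\delta_{1}m}(f)=\chi_{m}(\delta^{0}f)$ for every $f\in\text{hom}(C,G)^{0}$, which is simply the content of the dual morphism discussion of Section \ref{sec:math} applied to $\delta^{0}$ and the identification of $\text{hom}(C,G)_{\bullet}$ with the character dual of $\text{hom}(C,G)^{\bullet}$. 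Specialising to $f=\delta^{-1}t$ and invoking $\delta^{0}\delta^{-1}=0$ then gives $\chi_{\delta_{1}m}(\delta^{-1}t)=\chi_{m}(0)=1$, as required.

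The genuine content of the lemma is concentrated in this final step: the additive identities are pure bookkeeping, but the commutation $A_{t}B_{m}=B_{m}A_{t}$ is the multiplicative incarnation of $\delta^{p+1}\delta^{p}=0$. The only mildly delicate point is therefore to make precise the duality $\chi_{\delta_{1}m}(f)=\chi_{m}(\delta^{0}f)$, which amounts to checking that, in the conventions of Definition \ref{def:InternalHom}, the operator $\delta_{1}$ used in Definition \ref{def:AopBop} is literally the transpose $\widehat{\delta^{0}}$ on characters. Once this identification is spelled out, the vanishing of the phase and hence the commutation of $A_{t}$ and $B_{m}$ follows immediately.
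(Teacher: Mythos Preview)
Your proof is correct and mirrors the paper's own argument essentially line for line: the paper declares the first two relations trivial and, for the third, computes $B_m A_t=\chi_{\delta_1 m}(\delta^{-1}t)\,A_t B_m=\chi_m(\delta^0\delta^{-1}t)\,A_t B_m$ and invokes $\delta^0\delta^{-1}=0$. Your version is slightly more explicit about the duality identity $\chi_{\delta_1 m}(f)=\chi_m(\delta^0 f)$, which the paper uses without comment, but otherwise the two proofs coincide.
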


\begin{proof}
The first two relations are trivial, while for the third one it is immediate to compute:
\begin{align*}
B_m A_t  = \chi_{\delta_1 m}\left( \delta^{-1} t \right) P_{\delta^{-1} t} Q_{\delta_1 m} = \chi_m \left( \delta^0 \delta^{-1} t \right) A_t B_m,
\end{align*}
the result follows from the cochain property $\delta^0 \delta^{-1} = 0$. 
\end{proof}

Regarding the local operators we have the following set of relations:

\begin{lem}\label{lemma:AlgAB}
For all $x\in K_n,\, y\in K_m,\, r \in \hat{G}_{n+1},\, r' \in \hat{G}_{m+1}$ and $g \in G_{n-1}, g' \in G_{m-1}$, the following relations for the local gauge transformation and local gauge holonomy (Definition \ref{def:AxBx}) are satisfied:
\begin{enumerate}[(i)]
\item Pairwise commutation 
\begin{align*}
A_x^{r} A_y^{r'} = A_y^{r'} A_x^{r} \quad &; \quad B_x^{g} B_y^{g'} =B_y^{g'} B_x^{g}; \\ 
A_x^{r} B_y^{g} & = B_y^{g} A_x^{r}.
\end{align*}

\item Orthogonality
\begin{align*}
A_x^{r} A_x^{r'} = \delta(r,r')A_x^{r} \quad , \quad B_y^{g} B_y^{g'} =\delta(g, g') B_x^{g},
\end{align*}
where $\delta \left( \cdot,\cdot \right)$ is the Kronecker delta.

\item Completeness
\begin{align*}
\sum_{r \in \hat{G}_{n+1}} A^{r}_x = \mathbb{1}, \quad \quad \sum_{g \in G_{n-1}} B^{g}_x = \mathbb{1},
\end{align*}
where $\mathbb{1}$ is the identity operator.
\end{enumerate}
\end{lem}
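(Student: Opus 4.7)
The three claims all reduce to routine manipulations of the definitions using two ingredients: the algebra of the fundamental shift and clock operators collected in Lemma~\ref{lem:ABalgebra}, and orthogonality of characters for finite abelian groups. No new structural input is needed.

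For pairwise commutation I would substitute the definitions of $A_x^r$ and $A_y^{r'}$, distribute the double sum, and use $A_{g x^*} A_{g' y^*} = A_{g x^* + g' y^*} = A_{g' y^* + g x^*} = A_{g' y^*} A_{g x^*}$, which is immediate from Lemma~\ref{lem:ABalgebra} (equivalently, relation~\eqref{ABalgebra}). The same argument term by term with $B_m B_{m'} = B_{m'} B_m$ yields commutation of the $B$'s, while $A_t B_m = B_m A_t$ gives $[A_x^r, B_y^g]=0$. A caveat worth spelling out in the proof: for $x\neq y$ the operators $A_x^r$ and $A_y^{r'}$ merely commute; they are \emph{not} orthogonal, because the map $g x^* + g' y^*$ is generically nontrivial at both $x$ and $y$, so the orthogonality relation only appears when the base point is shared.

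For orthogonality, I would compute $A_x^r A_x^{r'}$ directly: expanding both sums and using $A_{g x^*} A_{g' x^*} = A_{(g+g') x^*}$, the substitution $h = g+g'$ gives
\begin{equation*}
A_x^r A_x^{r'} = \frac{1}{|G_{n+1}|^{2}} \sum_{h \in G_{n+1}} \chi_{r'}(h)\, A_{h x^*} \sum_{g \in G_{n+1}} \chi_{r-r'}(g).
\end{equation*}
Character orthogonality on the finite abelian group $G_{n+1}$ forces $\sum_g \chi_{r-r'}(g) = |G_{n+1}|\,\delta(r,r')$, collapsing the expression to $\delta(r,r')\, A_x^{r'}$. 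The identity $B_x^g B_x^{g'} = \delta(g,g')\,B_x^g$ is proved identically, the only difference being that orthogonality is now applied over the dual group $\hat{G}_{n-1}$. For completeness I would interchange the order of summation in $\sum_{r\in\hat{G}_{n+1}} A_x^r$, isolating $\sum_{r\in\hat{G}_{n+1}} \chi_r(g) = |G_{n+1}|\,\delta(g,0)$ (the dual form of character orthogonality), so that only the $g=0$ term survives and yields $A_{0\cdot x^*} = A_0 = \mathbb{1}$. The same manipulation gives $\sum_g B_x^g = \mathbb{1}$.

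There is no real obstacle: the whole argument is bookkeeping. The one place to stay alert is which direction of character orthogonality is in force at each step, namely summation over \emph{group elements} in the orthogonality part versus summation over \emph{characters} in the completeness part; conflating the two is the only way to go wrong.
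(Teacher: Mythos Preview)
Your proposal is correct and follows essentially the same route as the paper: commutation is read off directly from the relations in Lemma~\ref{lem:ABalgebra}, orthogonality is obtained by expanding the double sum, reindexing via $h=g+g'$, and invoking character orthogonality over $G_{n+1}$ (resp.\ $\hat{G}_{n-1}$), and completeness follows by swapping the order of summation and using the dual orthogonality relation to collapse to $A_{0\cdot x^*}=\mathbb{1}$. Your added remarks distinguishing the $x\neq y$ case from the shared-basepoint case, and flagging which direction of character orthogonality is in play, are helpful clarifications but do not depart from the paper's argument.
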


\begin{proof}
\begin{enumerate}[(i)]
\item This set of relations are immediately satisfied by the pairwise commutation shown in Proposition \ref{ABalgebra}.

\item
It is straightforward to compute:
\begin{align*}
A_x^{r} A_x^{r'} & = \dfrac{ \sum_b \sum_c}{\left| G_{n+1} \right|^2}  \chi_{r} (b) \chi_{r'} (c) A_{bx^*}A_{cx^*} \\ 
& = \dfrac{\sum_b \sum_c}{\left| G_{n+1} \right|^2}  \chi_{r' - r}(c) \chi_{r}(b+c) A_{(b+c)x^*} \\ 
& =\dfrac{1}{\left| G_{n+1} \right|}  \sum_c \chi_{r' - r}(c) A_x^{r},
\end{align*}
from which the result follows, when using the orthonormal relations of the characters. The second expression is obtained analogously.

\item We readily compute:
\begin{align*}
\sum_{r \in \hat{G}_{n+1}} A_{x}^r & =\frac{1}{|G_{n+1}|} \sum_{g\in G_{n+1}} \sum_{r \in \hat{G}_{n+1}} \chi_{r} (g) A_{gx^*} \\ 
& = \sum_{g \in G_{n+1}} \delta \left( g, 0 \right) A_{gx^*} \\ 
& = A_{0 x^*} \equiv A_0
\end{align*}
from which the result follows, also when using the orthonormal relations of the characters. The second expression is obtained analogously.
\end{enumerate}
\end{proof}

Notice that the following commutation relations holds immediately from \ref{lemma:AlgAB}.

\begin{lem}[]
the Hamiltonian operator $H: \mathcal{H} \rightarrow \mathcal{H}$ as Defined in (\ref{def:Hamiltonian}), satisfy:
\begin{enumerate}[(i)]
\item $H A_x^{r} = A_x^{r} H$, for all $x \in K_n$ and $r \in \hat{G}_{n+1}$,
\item $H B^{g}_y = B^{g}_y H$, for all $y \in K_m$ and $g \in G_{m-1}$.
\end{enumerate}
where $A_x^{r}$ and $B_y^{g}$ are as in Definition \ref{def:AxBx}.
\end{lem}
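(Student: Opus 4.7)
The plan is to observe that the claim reduces immediately to the pairwise commutation relations already established in Lemma \ref{lemma:AlgAB}, since the Hamiltonian is, by Definition \ref{def:Hamiltonian}, just a finite linear combination of the operators $A_x^0$ and $B_y^0$. The commutator $[H,\,\cdot\,]$ is linear in $H$, so it is enough to verify that $A_x^r$ (respectively $B_y^g$) commutes with each summand of $H$ separately.

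First I would write
\begin{align*}
[H,\,A_x^r] \;=\; -\sum_n\sum_{x'\in K_n}[A_{x'}^0,\,A_x^r]\;-\;\sum_n\sum_{y'\in K_n}[B_{y'}^0,\,A_x^r],
\end{align*}
and similarly for $[H,\,B_y^g]$. Then I would invoke Lemma \ref{lemma:AlgAB}(i) term by term: the relation $A_{x'}^{r'} A_x^{r}=A_x^{r} A_{x'}^{r'}$ kills every term in the first sum (with $r'=0$), and the mixed relation $A_x^{r}B_{y'}^{g'}=B_{y'}^{g'}A_x^{r}$ kills every term in the second sum (with $g'=0$). The argument for $B_y^g$ is symmetric, using the $B$--$B$ and $B$--$A$ parts of the same lemma.

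There is essentially no obstacle here; the substance of the proof has already been absorbed into Lemma \ref{lemma:AlgAB}, whose derivation in turn rests on the cochain property $\delta^0\delta^{-1}=0$ used in Lemma \ref{lem:ABalgebra} to get $A_tB_m=B_mA_t$. The only thing worth a brief remark is that the sums defining $H$ are finite by Remark \ref{rem:finiteHS} (the sets $K_n$ are finite and non-empty only for finitely many $n$), so no questions of convergence or interchange of summations arise, and the commutator distributes without caveat. Consequently the lemma follows in a single line once the commutator is expanded.
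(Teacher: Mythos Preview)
Your proposal is correct and matches the paper's own treatment: the paper simply states that the lemma ``holds immediately from \ref{lemma:AlgAB}'' without giving any further detail, and what you wrote is exactly the one-line expansion of that remark. The additional observations about linearity of the commutator and finiteness of the sums (via Remark \ref{rem:finiteHS}) are accurate and only make explicit what the paper leaves implicit.
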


\bibliographystyle{apsrev4-1}
\bibliography{sample.bib}
\end{document}